\setlist[enumerate,1]{label={(\arabic*)}}
\declaretheorem[name=Theorem]{thm}
\declaretheorem[name=Lemma,sibling=thm]{lma}
\declaretheorem[name=Corollary,sibling=thm]{col}
\declaretheorem[name=Definition,sibling=thm]{dfn}
\declaretheorem[name=Proposition,sibling=thm]{pro}
\numberwithin{equation}{section}
\DeclareRobustCommand{\[}{\begin{equation}}
\DeclareRobustCommand{\]}{\end{equation}}
\title{A sublinear time quantum algorithm for longest common substring problem between run-length encoded strings}
\author{Tzu-Ching Lee}
\author{Han-Hsuan Lin}
\affil{Department of Computer Science, National Tsing Hua University, Hsinchu 30013, Taiwan}
\date{}
\newcommand{\wtilde}[1]{\widetilde{#1}}
\newcommand{\wt}[1]{\widetilde{#1}}
\newcommand{\afa}{\alpha}
\newcommand{\bta}{\beta}
\newcommand{\gma}{\gamma}
\newcommand{\Omg}{\Omega}
\newcommand{\dta}{\delta}
\newcommand{\eps}{\epsilon}
\newcommand{\cA}{\mathcal{A}}
\newcommand{\cB}{\mathcal{B}}
\newcommand{\cO}{\mathcal{O}}
\newcommand{\Od}{\cO}
\newcommand{\tO}{\tilde{\Od{}}}
\newcommand{\tOmg}{\tilde{\Omg{}}}
\newcommand{\ts}{\tilde{s}}
\newcommand{\tS}{\tilde{S}}
\newcommand{\tA}{\tilde{A}}
\newcommand{\tB}{\tilde{B}}
\newcommand{\TS}{\tilde{S}}%
\newcommand{\tn}{\tilde{n}}
\newcommand{\ti}{\tilde{i}}
\newcommand{\tj}{\tilde{j}}
\newcommand{\td}{\tilde{d}}
\newcommand{\Tt}{\tilde{t}}%
\newcommand{\PARITY}{\mathsf{PARITY}}
\newcommand{\LCS}{\mathsf{LCS}}
\newcommand{\LCSRLE}{\mathsf{LCS\text{-}RLE}}
\newcommand{\LCSRLEP}{\LCSRLE^\mathsf{p}}
\newcommand{\ELLCSRLE}{\mathsf{EL\text{-}LCS\text{-}RLE}}
\newcommand{\DLLCSRLE}{\mathsf{DL\text{-}LCS\text{-}RLE}}
\newcommand{\calA}{\mathcal{A}}
\newcommand{\Piv}{P^{\text{ -1}}}
\newcommand{\polylog}{\mathrm{polylog}}
\newcommand{\poly}{\mathrm{poly}}
\DeclarePairedDelimiter\set{\{}{\}}
\DeclarePairedDelimiter\PP{(}{)}
\DeclarePairedDelimiter\floor{\lfloor}{\rfloor}
\DeclarePairedDelimiter\ceil{\lceil}{\rceil}
\newcommand{\PH}{{}\cdot{}}
\newcommand{\Ie}{I.e.\ }
\newcommand{\ie}{i.e.\ }
\newcommand{\eg}{e.g.,\ }
\newcommand{\wrt}{w.r.t.\ }
\begin{document}
\def\included{yep}
\tikzset {
	wavy/.style={ decoration={coil,aspect=0,amplitude=1pt} },
	desc/.style={ node font=\footnotesize },
	brkt/.style={ pos=0.25 },
}

\ExplSyntaxOn
\NewDocumentCommand{\rle}{ m } {
	\group_begin:
	\bool_set_false:N \l_tmpa_bool
	\clist_map_inline:nn {#1} {
		\bool_set_inverse:N \l_tmpa_bool
		\bool_if:NTF \l_tmpa_bool	{
			\tl_set:Nn \l_tmpa_tl {##1}
		} {
			\texttt{\l_tmpa_tl}^{##1}
		}
	}
	\group_end:
}

\NewDocumentCommand{\torle}{ m } {
	\group_begin:
	\str_clear:N \l_tmpa_str
		\str_map_inline:nn {#1} {
			\str_if_in:NnTF \l_tmpa_str {##1} {
				\str_put_right:Nn \l_tmpa_str {##1}
			} {
				\__rle_str_run:N \l_tmpa_str
				\str_set:Nn \l_tmpa_str {##1}
			}
		}
	\__rle_str_run:N \l_tmpa_str
	\group_end:
}

\cs_new:Nn \__rle_str_run:N {
	\str_if_empty:NF #1 {
		\texttt{\str_head:N #1}^{\str_count:N #1}
	}
}

\cs_new:Nn \__rle_parse:nNN {
	\str_map_inline:nn {#1} {
		\str_if_in:nnTF {1234567890} {##1} {
			\seq_put_right:Nn #3 {##1}
		} {
			\seq_put_right:Nn #2 {##1}
		}
	}
}

\cs_new:Nn \__rle_plus_fold:NNnn {
	\seq_map_indexed_inline:Nn #2 {
		\int_compare:nNnT {##1} > {#4} {
			\seq_map_break:
		}
		\int_compare:nNnF {##1} < {#3} {
			\int_add:Nn #1 {##2}
		}
	}
}

\cs_new:Nn \__rle_path:nn {
	\str_if_eq:nnTF {#1} {notail} {
		\path[save~path=\rlepath] (0, 0) rectangle (#2, \h);
	} {
		\path[save~path=\rlepath]
			(#2-\h/8+2, 0) decorate[wavy] {-- ++(0, \h) }
			-- (0, \h) |- cycle
			(#2+\h/8+2, 0) decorate[wavy] {-- ++(0, \h) }
			-- ++(2-\h/8, 0) |- cycle;
	}
}

\cs_new:Nn \__rle_run_text:nn {
	\node at (\l_tmpa_int+#2*0.5,\h/2) {\texttt{#1}$^{#2}$};
	\int_add:Nn \l_tmpa_int {#2}
	\draw (\l_tmpa_int,0) -- ++(0, \h);
}

\int_new:N \l_a_len_int
\int_new:N \l_h_len_int
\int_new:N \l_t_len_int
\int_new:N \l_r_len_int
\int_new:N \l_b_len_int
\int_new:N \l_r_cnt_int

\NewDocumentCommand{\drawrle}{ O{tail} m O{1} O{10000} o } {
	\group_begin:
	\__rle_parse:nNN {#2}
		\l_tmpa_seq \l_tmpb_seq

	\int_set:Nn \l_r_cnt_int {\seq_count:N \l_tmpb_seq}

	\__rle_plus_fold:NNnn
		\l_a_len_int \l_tmpb_seq {1} {\l_r_cnt_int}

	\__rle_plus_fold:NNnn
		\l_h_len_int \l_tmpb_seq {1} {#3}

	\__rle_plus_fold:NNnn
		\l_t_len_int \l_tmpb_seq {#3+1} {\l_r_cnt_int}

	\__rle_plus_fold:NNnn
		\l_r_len_int \l_tmpb_seq {#3} {#3}
	\int_decr:N \l_r_len_int

  \str_if_eq:nnTF {#5} {-NoValue-} {
		\__rle_plus_fold:NNnn
			\l_b_len_int \l_tmpb_seq {#3+1} {\int_min:nn {#4} {\l_r_cnt_int}}
		\int_incr:N \l_b_len_int
		\int_compare:nNnT {#4} > {\l_r_cnt_int} {
			\str_if_eq:nnF {#1} {notail} {
				\int_add:Nn \l_b_len_int {4}
			}
		}
	} {
		\int_set:Nn \l_b_len_int {#5-\l_r_len_int-1}
		\int_compare:nNnTF {\l_b_len_int+\l_h_len_int} > {\l_a_len_int} {
				\int_add:Nn \l_b_len_int {4}
		} {
				\int_add:Nn \l_b_len_int {1}
		}
	}

	\begin{scope}[xscale=\xs, shift={(-\l_h_len_int, 0)}]

		\__rle_path:nn {#1} {\l_a_len_int}

		\coordinate (HEAD) at (0, \h/2);
		\coordinate (R) at (\l_h_len_int-1-\l_r_len_int, 0);
		\coordinate (B) at (\l_h_len_int-1+\l_b_len_int, 0);

		\begin{scope}
			\clip[use~path=\rlepath];
			\fill[fill=red!30]  (\l_h_len_int-1, 0) rectangle ++(-\l_r_len_int, \h);
			\fill[fill=cyan!30] (\l_h_len_int-1, 0) rectangle ++( \l_b_len_int, \h);
		\end{scope}

		\draw[dashed] (\l_h_len_int-1, 0) -- ++(0, \h);
		\draw[thick, use~path=\rlepath];

		\int_set:Nn \l_tmpa_int {0}
		\int_step_inline:nnn {1} {\l_r_cnt_int} {
			\__rle_run_text:nn
				{\seq_item:Nn \l_tmpa_seq {##1}}
				{\seq_item:Nn \l_tmpb_seq {##1}}
		}
		\str_if_eq:nnF {#1} {notail} {
			\node at (\l_tmpa_int+2/2, \h/2) {$\cdots$};
			\node at (\l_tmpa_int+2+2/2, \h/2) {$\cdots$};
		}
	\end{scope}
	\group_end:
}
\ExplSyntaxOff

\maketitle

\begin{abstract}
	We give a sublinear quantum algorithm for the longest common substring (LCS) problem on the run-length encoded (RLE) inputs, under the assumption that the prefix-sums of the runs are given.
	Our algorithm costs $\tO(n^{5/6})\cdot\Od(\polylog(\tn))$ time, where $n$ and $\tn$ are the encoded and decoded length of the inputs, respectively.
	We justify the use of the prefix-sum oracles by showing that, without the oracles, there is a $\Omg(n/\log^2n)$ lower-bound on the quantum query complexity of finding LCS given two RLE strings due to a reduction of $\PARITY$ to the problem.
\end{abstract}

\section{Introduction}

String processing is an important field of research in theoretical computer science.
There are lots of results for various classic string processing problems such as string matching \autocite{KnuthMP77, BoyerM77, KarpR87}, longest common substring, edit distance.
The development of string processing algorithms has led to the discovery of many impactful computer science concepts and useful tools, including dynamic programming, suffix tree \autocite{Weiner73, Farach97} and trie \autocite{Fredkin60}.
String processing also has applications in various fields such as bioinformatics \autocite{NeedlemanW70}, image analysis \autocite{HindsFD90}, and compression \autocite{ZivL77}.

A natural extension of string processing is to do it between \emph{compressed strings}. Ideally, the time cost of the string processing between compressed strings would be independent of the decoded lengths of the strings. Since the compressed string can be much shorter than the original string, this would significantly save computation time. Of course, whether such fast string processing is possible depends on what kind of compression scheme we are using.

Run-Length Encoding (RLE) is a simple way to compress strings. In RLE, the consecutive repetition of a character (run) is replaced by a character-length pair -- the character itself and the length of run.
For example, the RLE of the string \texttt{aaabcccdd} is $\rle{a,3,b,1,c,3,d,2}$.
RLE is a common method to compress fax data \autocite{ITU-T.4:2004}, and is a part of the JPEG and TIFF image standard \autocite{ISO10918:1994, ISO12639:1998} as well.
String processing on RLE strings has been studied a lot. Apostolico, Landau, and Skiena gave an $\Od(n^2\log n)$-time algorithm on finding Longest Common Subsequence between two RLE compressed strings \autocite{ApostolicoLS99}, where $n$ is the length of the compressed strings.
Hooshmand, Tavakoli, Abedin, and Thankachan obtained an $\Od(n\log n)$-time algorithm on computing the Average Common Substring with RLE inputs \autocite{HooshmandTAT18}.
Chen and Chao proposed an algorithm to compute the edit distance between two RLE strings \autocite{ChenC13}, and the result was further improved to near-optimal by Clifford, Gawrychowski, Kociumaka, Martin and Uznanski in \autocite{CliffordGKMU19}, which runs in $\Od(n^2\log n)$ time.

Alternatively, another way to speedup string processing is to use \emph{quantum algorithms}. If we use {quantum algorithms} for string processing, it is possible to get time cost \emph{sublinear} in the input length because the quantum computer can read the strings in superposition. One of the earliest such results was by Hariharan and Vinay \autocite{HariharanV03}, who constructed an $\tO(\sqrt{n})$-time string matching quantum algorithm, in which Grover's search \autocite{Grover96} and Vishkin's deterministic sampling technique \autocite{Vishkin90} were used to reach this near-optimal time complexity.
Recently, Le Gall and Seddighin \autocite{LeGallS22} obtained several sublinear-time quantum algorithms for various string problems, including an $\tO(n^{5/6})$-time algorithm for longest common substring (LCS), an $\tO(\sqrt{n})$-time algorithm for longest palindrome substring (LPS), and an $\tO(\sqrt{n})$-time algorithm for approximating ulam distance.
Another recent work is done by Akmal and Jin \autocite{AkmalJ22}, using synchronizing sets \autocite{KempaK19} with quantum walk \autocite{MagniezNRS11}, showing that LCS can be solved in $\tO(n^{2/3})$ quantum time. They also introduced a $n^{1/2+o(1)}$ algorithm for lexicographically minimal string rotation problem, and a $\tO(\sqrt{n})$-time algorithm for longest square substring problem in the same paper.

In this work, we combine the two above ideas and investigate the possibility of using \emph{quantum algorithm} to do string processing on \emph{compressed strings}, while keeping the advantages of both methods. Thus, we ask the following question:
\begin{center}
Is it possible to have a quantum string processing algorithm whose time cost is sublinear in the encoded lengths of the strings and independent of the decoded lengths?\footnote{With a non-trivial string problem and a non-trivial compression scheme.}
\end{center}

  The main result of this paper is the first almost\footnote{We have $\polylog$ dependence on the decoded length $\tn$.} affirmative answer to the above question:
 
\begin{thm}[Informal]
There is a quantum algorithm that finds the RLE of a LCS given two RLE strings in $\Od(n^{5/6}\cdot\polylog(n)\cdot\polylog(\tn))$ time, with oracle access to the RLE strings and the prefix-sum of their runs\footnote{prefix-sum is defined in \autoref{dfn:prefix-sum}. We justify the use of prefix-sum oracle with the following two facts: first, constructing it only adds a constant factor in preprocessing time. Second, without the prefix-sum oracles, finding an LCS from RLE inputs needs at least $\tOmg(n)$ queries due to a reduction of the $\PARITY$ problem.}, where $n$ and $\tn$ are the encoded length and the decoded length of the inputs, respectively.
\end{thm}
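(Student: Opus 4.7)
The plan is to adapt the LeGall-Seddighin quantum LCS algorithm to the RLE setting, taking advantage of a rigid structural fact: any common substring between two RLE strings decomposes into an \emph{interior} sequence of full runs, each of which must match in both character and length, sandwiched between at most two partial-run matches at its left and right boundaries. Exploiting this, the algorithm can be designed to operate entirely in the encoded (run-index) space of length $n$, invoking the prefix-sum oracle only to resolve decoded-length questions at a $\polylog(\tn)$ overhead.

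I would first reduce to the decision version $\DLLCSRLE(\tL)$ via length-doubling, so that it suffices to decide, for a guessed decoded length $\tL$, whether a common substring of decoded length at least $\tL$ exists, and to output one if so. The decision procedure then branches on $k$, the number of interior runs of the hypothetical LCS. In the short-interior regime, run a Grover search over pairs of starting run indices $(i,j)$; for each candidate, verify by walking through the $k$ interior runs via the RLE oracle and extending each boundary to its maximal partial-run match, where the decoded total is computed with two prefix-sum queries of cost $\polylog(\tn)$. In the long-interior regime, sample a set of anchor run indices in one string at density $\Theta(1/k)$, so that with high probability at least one anchor lands inside the interior of any length-$\tL$ common substring; for each anchor, a nested Grover search locates a matching-run candidate in the other string (same character, same length), and a second quantum subroutine determines how far the match extends in both directions. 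Balancing the two regimes with the standard trade-off then yields the claimed $\tO(n^{5/6})$ cost.

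The main obstacle will be the interaction between encoded and decoded lengths when the run-length distribution is highly non-uniform: a single huge run can realize an LCS with $k=0$ interior runs and decoded length $\Theta(\tn)$, while a long interior of tiny runs gives the opposite extreme. I would handle this by treating the $k=0$ case separately with a $\tO(\sqrt{n})$ Grover over character-matching run pairs (the match length is determined just by the two run lengths at the candidate boundary), and by calibrating the anchor-sampling probability against the interior-run count (bounded by $n$) rather than against $\tL$, so that $\tn$ enters only through polylogarithmic factors from prefix-sum lookups. A subtle but crucial point is verifying that the hit probability in the long-interior regime does not degrade when interior run lengths vary wildly; because anchoring is by run index rather than decoded position, each interior run contributes equally to the sampling, which sidesteps this concern and keeps the analysis parallel to the uncompressed case.
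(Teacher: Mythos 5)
Your high-level framework (binary search on decoded target length, split by the number of interior runs $k$, operate on run indices, use prefix-sum only for $\polylog(\tn)$ length arithmetic) tracks the paper's structure, and the structural observation about a common substring decomposing into an interior of fully matching runs plus two partial boundary runs is exactly the observation (\cref{thm:shift}) the paper exploits. But both of your core subroutines fall short of the claimed $\tO(n^{5/6})$ bound, and the gaps are where the paper's real work lives.

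In the short-interior regime you propose a Grover search over pairs of starting run indices $(i,j)\in[n]\times[n]$. That search space has size $n^{2}$, so Grover alone costs $\tO(n)$ iterations before even multiplying by the per-pair verification cost; this is already linear, not sublinear. Even the $k=0$ case you carve out is a claw-finding problem over two lists of $n$ runs (does there exist a pair with the same character and both lengths $\geq\td$?), which has an $\Omega(n^{2/3})$ query lower bound and cannot be done in $\tO(\sqrt{n})$ unless the alphabet size is bounded. What the paper actually uses here is the MNRS quantum-walk element-distinctness framework (\cref{thm:MNRS,thm:api}): the walk over a product of Johnson graphs incurs only $\tO(n^{2/3})$ steps, each of cost $\tO(n^{1/6})$ for comparisons, with the skip-list/range-maximum data structure making the vertex checks cheap enough; this is not a standard Grover nesting and you would need to import it. In the long-interior regime, the paper avoids a $\sqrt{n}$-cost search inside all of $B$ for each anchor: it samples a short window $S$ of $B$ (encoded length $\ell$) in addition to the pattern $P$ from $A$, so that the search is a $\tO(\sqrt{\ell})$ pattern match rather than $\tO(\sqrt{n})$, and then amplitude-amplifies the sampling success probability $\Omega(\ell^{2}/n^{2})$ to constant, giving $\tO(n/\sqrt{\ell})$. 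Your per-anchor nested Grover over the full other string does not obviously recover this; the ``standard trade-off'' you invoke at the end is precisely the part that needs to be shown, and as written it does not balance to $\tO(n^{5/6})$. You also do not address the periodicity case that arises when the sampled pattern has multiple occurrences in the window, which the paper treats explicitly (\cref{thm:grow-pair} and the surrounding discussion).
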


Our algorithm mainly follows the construction of the LCS algorithm of \autocite{LeGallS22}, but we need to overcome a big difficulty -- the longest common substring in terms of encoded length may differ from the longest common substring in terms of decoded length. For example, between $\torle{abcdbbbbccccc}$ and $\torle{abcd@bbbbcc}$, the RLE string $\torle{abcd}$ is the longest one in terms of encoded length, and $\torle{bbbbcc}$ is the longest one in terms of decoded length, which is what we want to find.
So applying existing LCS algorithms for strings directly on RLE inputs is not an option. We deal with this problem with various techniques and observations in the algorithm.

\subsection{Related work}

Recently, Gibney and Thankachan \autocite{GibneyT23} developed a $\tO(\sqrt{zn})$-time quantum algorithm for the Lempel-Ziv77 algorithm (LZ77) \autocite{ZivL77} and calculating the Run-length-encoded Burrows-Wheeler Transform (RL-BWT), where $n$ is the length of the input string and $z$ is the number of factor in the LZ77 factorization of the input, which roughly corresponds to the encoded length of that string. Given two strings $A$ and $B$, they showed how to calculate the LZ77 compression and a supporting data structure of $A\$B$. With this compressed data, the LCS between $A$ and $B$ can be found efficiently. Note that this is different from our input model because they can do preprocessing on the \emph{concatenated string} $A\$B$, while in our work, $A$ and $B$ are preprocessed independently.

\subsection{Overview of the algorithm}

The LCS algorithm in \autocite{LeGallS22} starts with a binary search on the solution size $d$, in which, different algorithms are used to find an LCS of length at least $d$ for different sizes. For $d\leq n^{1/3}$, an element distinctness algorithm is used; for $d\geq n^{1/3}$, an amplitude-amplified sampling algorithm is used.

Here, to find the longest common substring from two RLE strings $A$ and $B$ of decoded length $\tn$, we do a binary search on the decoded length of the answer $\td\in[\tn]$. In each iteration of the binary search, we check whether a common substring $s$ of decode length $|\ts|=\td$ exists.
The checking algorithm consists of two sub-algorithms, one finds the ``short" answers, and the other finds the ``long" ones. We say a common substring $s$ is ``short" if its \emph{encoded length} is not greater than $n^{1/3}$, otherwise we say it is ``long". Different from \autocite{LeGallS22}, in our work, both sub-algorithms are executed in every iteration of the binary search.

\paragraph{finding short answers}
We use Ambainis' element distinctness between two sets of size $n$ to find an LCS from two RLE strings, in which we consider $i\in[n]$ and $j\in[n]$ being ``equal" if they correspond to the first run of a common substring $s$ in $A$ and $B$, respectively\footnote{This is possible because Ambainis' algorithm can solve problems more general than element distinctness. See e.g. \autocite{ChildsE05}}. More precisely, $i\in[n]$ and $j\in[n]$ are ``equal" if and only if $f(i,A,j,B)\vee f(j,B,i,A)=1$, where $f(i,A,j,B)$ is a predicate that is $1$ iff a common substring of decoded length $\td$ and encoded length at most $n^{1/3}$ starting from the beginning of $A[i]$ and within $B[j]$ exists.
Here we utilize the observation that, for the RLE of an LCS $s$, the first character of $s$ must correspond to either the beginning of a run in $A$, the beginning of a run in $B$, or both of them. $f(i,A,j,B)$ can be seen as finding a substring with the assumption that it starts from the beginning of $A[i]$. The separation of $f(i,A,j,B)$ and $f(j,B,i,A)$ enables us to use skip list and range minimum query to construct a data structure on $B$ that, for any given $i$, efficiently checks the existence of $j$, \ie calculates $f(i,A,j,B)$. We then calculate $f(j,B,i,A)$ with a data structure on $A$.
Element distinctness costs $\tO(n^{2/3}\cdot T)$ time where $T$ is the time to compare two elements.
Here two elements are compared by using Grover's search to find a mismatched run, with some extra $\tO(1)$-time checks. Hence the time complexity is $\tO(n^{2/3}\cdot \sqrt{n^{1/3}})=\tO(n^{5/6})$.

\paragraph{finding long answers}
Consider the following procedure: choose an integer $n^{1/3} <\ell\leq n$. Sample two RLE strings $P=A[i:i+2\floor{\ell/3}-1]$ and $S=B[j:j+\ell-1]$ with uniformly random $i,j$.
Suppose a common substring $s$ with encoded length $|s|=\ell$ exists, then the probability that $P$ being a substring of $s$ and $S$ covers the corresponding $P$ in $B$ at the same time, is $\Omg(\ell^2/n^2)$.
By first finding the corresponding $P$ in $S$ with pattern matching with time cost $\tO(\sqrt{\ell})$, we can grow the result into a common substring of encoded length $2\ell$ using minimal finding in $\tO(\sqrt{\ell})$ time.
This procedure succeeds with probability $\Omg(\ell^2/n^2)$.
With amplitude amplification, we can boost the success probability of the whole above procedure to $2/3$ and get a procedure that searches for an common substring of encoded length between $\ell$ and $2\ell$ with time complexity $\tO(\sqrt{n^2/\ell^2}\cdot \sqrt{\ell})=\tO(n/\sqrt{\ell})$.
We run the amplified procedure repeatedly with $\ell$ being set to $n^{1/3},2n^{1/3},4n^{1/3},\ldots$, until an answer is found or $\ell>n$. The search stops in $\Od(\log n)$ iterations, results in time complexity of $\Od(\log n)\cdot\tO(n/\sqrt{n^{1/3}})=\tO(n^{5/6})$.

\subsection{Paper organization}
In \autoref{sec:prelim}, we introduce the notations and the definitions.
The main algorithm is explained in \autoref{sec:meat} with the sub-algorithm for short answers in \autoref{sec:small-d} and the one for long answers in \autoref{sec:large-d}. In \autoref{sec:LB}, we investigate the query lower-bound of finding LCS from two RLE strings without prefix-sum oracles.

\section{Preliminaries}\label{sec:prelim}

\subsection{Conventions and Notations}
We abbreviate both ``run-length encoding" and ``run-length-encoded" to ``RLE".
We use tilde ( $\wt{\PH}$ ) to denote decoded strings and their properties, while notations without tilde refer to their RLE counterparts.
We use calligraphic letters (\eg $\cA$ and $\cB$) to denote algorithms, use teletype letters (\eg $\texttt{a}$) to denote strings or character literals, and use sans-serif letters (\eg $\LCS$) to denote problems.
We count indices from $1$.
By $[m]$, we mean the set $\set{1,2,\ldots,m}$.
The asymptotic notations $\tO(\PH)$ and $\tOmg(\PH)$ hide $\polylog(n)$ and $\polylog(\tn)$ factors, where $n$ is the encoded length of the input, and $\tn$ is the decoded length of the input. We say a quantum algorithm succeeds with high probability if its success probability is at least $\Omg(1-1/\poly(n))$.

\paragraph{Strings}
A \emph{string} $\ts\in\Sigma^\ast$ is a sequence of characters over a character set $\Sigma$.
The length of a string $\ts$ is denoted as $|\ts|$.
For a string $\ts$ of length $n$, a \emph{substring} of $\ts$ is defined as $\ts[i:j] := \ts[i':j'] = \ts[i']\ts[i'+1]\ldots\ts[j']$, where $i'=\max(1,i)$ and $j'=\min(n,j)$. \Ie it starts at the $i'$-th character and ends at the $j'$-th character.
If $i>j$, we define $s[i:j]$ as an empty string $\eps$.
We say a string $s$ is \emph{$\ell$-periodic} if $s[i]=s[i+\ell]$ for all $1\leq i\leq |s|-\ell$.

\paragraph{Run-length Encoding}
\emph{Run-length encoding} (RLE) of a string $\ts$, denoted as $s$, is a sequence of runs of identical characters $s[1]s[2]\cdots s[n]$, where $s[i]$ is a maximal run of identical characters, $n$ is the length of $s$, \ie the number of such runs.
For a run $s[i]$, $R(s[i])$ is its length and $C(s[i])$ denotes the unique character comprising the run.
When we write out $s$ explicitly, we write each $s[i]$ in the format of $C(s[i])^{R(s[i])}$, with $C(s[i])$ in a teletype font (\eg $\texttt{a}^3$).
Equivalently, each run $s[i]$ can be represented as a character-length pair $(C(s[i]), R(s[i]))$.
When there exists $i$ and $j\geq i$ such that $t = s[i:j]$, we call $t$ an \emph{substring} of $s$.
We say an RLE string $t$ is a \emph{generalized substring} of $s$ if the decoded string $\Tt$ is a substring of $\ts$.

\subsection{Computation Model}

\paragraph{Quantum Oracle}

Let $S$ be an RLE string. In a quantum algorithm, we access an RLE string $S$ via querying the oracle $O_S$. More precisely,
\[
	O_S:
	\ket{i}\ket{c}_{\text{char}}\ket{r}_{\text{run}}
	\mapsto
	\ket{i}\ket{c\oplus C(S[i])}_{\text{char}}\ket{r \oplus R(S[i])}_{\text{run}}
\]
is a unitary mapping for any $i\in[|S|]$, any $c\in\Sigma$, and any $r\in[\tn]$.
The corresponding prefix-sum $\Piv_S$ is accessed from the unitary mapping
\[
	O_P:
	\ket{i}\ket{x}
	\mapsto
	\ket{i}\ket{x\oplus P_S[i]},
\]
for any $i\in\set{0}\cup [|S|]$ and any $x\in[\tn]$.

\paragraph{Word RAM model}

We assume basic arithmetic and comparison operations between two bit strings of length $O(\log(\tn))$ and $O(\log n)$ all cost $O(1)$ quantum time.

\subsection{Primitives}\label{sec:primitive}

\paragraph{Grover's search (\autocite{Grover96}).}
Let $f:[n]\rightarrow\set{0,1}$ be a function. There is a quantum algorithm $\cA$ that finds an element $x\in[n]$ such that $f(x)=1$ or verifies the absence of such an element.
$\cA$ succeeds with probability at least $2/3$ and has time complexity $\tO(\sqrt{n}\cdot T(n))$, where $T(n)$ is the complexity of computing $f(i)$ for one $i\in[n]$.

\paragraph{Amplitude amplification (\autocite{BrassardH97}, \autocite{Grover98}).}
Let $\cA$ be a quantum algorithm that solves a decision problem with one-sided error and success with probability $p\in(0,1)$ in $T$ quantum time. There is another quantum algorithm $\cB$ that solves the same decision problem with one-sided error and success probability at least $2/3$ in $\tO(T/\sqrt{p})$ quantum time.

\paragraph{Minimum finding (\autocite{Durr96}).}
Let $f:[n]\rightarrow X$ be a function, where $X$ is a set with a total order. There is a quantum algorithm $\cA$ that finds an index $i\in[n]$ such that $f(i)\leq f(j)$ for all $j\in[n]$. $\cA$ succeeds with probability at least $2/3$ and costs $\tO(\sqrt{n}\cdot T)$ time, where $T$ is the time to compare $f(i)$ to $f(j)$ for any $i,j\in[n]$.

\paragraph{Element distinctness (\autocite{Ambainis07}, \autocite{LeGallS22})}\hspace{-1em}\footnotemark{}
Let $X$ and $Y$ be two lists of size $n$ and $f:(X\cup Y)\rightarrow\mathbb{N}$ be a function. There is a quantum algorithm $\cA$ that finds an $x\in X$ and a $y\in Y$ such that $f(x)=f(y)$. $\cA$ succeeds with probability at least $2/3$ and costs $\tO(n^{2/3}\cdot T(n))$ time; $T(n)$ is the time to do the three-way comparison between $f(a)$ and $f(b)$ for any $a,b\in X\cup Y$.

\footnotetext{The definition here is also known as claw finding. The time upper bound is obtained in \autocite[Section 2.1]{LeGallS22}. We also explain it in \autoref{sec:QW}}

\paragraph{Pattern matching (\autocite{HariharanV03}).}
Let $P$ and $S$ be two strings. There is a quantum algorithm $\cA$ that finds the leftmost (or the rightmost) instance of $P$ in $T$, or it verifies the absence of such an instance. $\cA$  succeeds with probability at least $2/3$ and costs $\tO(\sqrt{|P|}+\sqrt{|S|})$ quantum time.

\begin{lma}[Boost to high success probability]\label{thm:whp}
Let $\cA$ be a bounded-error quantum algorithm with time complexity $\Od(T)$.
By repeating $\cA$ for $\Od(\log n)$ times then output the majority of the outcomes, we can boost the success probability of $\cA$ to $\Omg(1-1/\poly(n))$ with overall time complexity $\Od(T\cdot\log n)$.
\end{lma}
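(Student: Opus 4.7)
The plan is a textbook repetition-and-Chernoff argument. Since $\cA$ is bounded-error, there is a constant $p>1/2$ (\eg $p=2/3$, matching the probabilities stated for the primitives in \autoref{sec:primitive}) such that a single invocation of $\cA$ returns the correct answer with probability at least $p$. First I would run $\cA$ independently $k=c\log n$ times for a sufficiently large constant $c$ to be fixed below, record the outcomes $X_1,\ldots,X_k$, and output their majority.

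For correctness, let $Y_i$ be the indicator that $X_i$ is the correct answer; the $Y_i$ are i.i.d.\ Bernoulli with mean at least $p>1/2$. The majority output is wrong only if $\sum_i Y_i \leq k/2$, which, since $pk - k/2 = (p-1/2)k$ is a constant fraction of the mean $pk$, is controlled by the multiplicative Chernoff bound:
\[
\Pr\!\left[\sum_{i=1}^{k} Y_i \leq k/2\right] \leq \exp(-\Omg(k)) = \exp(-\Omg(c\log n)) = n^{-\Omg(c)}.
\]
Choosing $c$ large enough drives this failure probability below any prescribed $1/\poly(n)$.

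For the time bound, each of the $k=\Od(\log n)$ invocations costs $\Od(T)$, and in the word RAM model, tallying the $k$ outputs and extracting a majority adds only $\Od(k\cdot\log n)=\Od(\log^2 n)$ time, which is dominated by $\Od(T\log n)$. The main subtlety --- rather than a genuine obstacle --- is what ``majority'' should mean when $\cA$ is a search procedure with several acceptable correct answers: for decision subroutines the majority over output bits is unambiguous, and for the search subroutines to which this boosting lemma is later applied one uses that candidate answers can be cheaply verified (keeping any verified output in place of a literal majority), so the same Chernoff calculation goes through unchanged.
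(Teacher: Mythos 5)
The paper states this lemma as a standard fact and does not supply a proof of its own, so there is nothing to compare against line by line. Your Chernoff-bound argument (run $\cA$ independently $\Od(\log n)$ times, take a majority, apply a multiplicative Chernoff bound to the indicator variables of success) is the textbook proof of exactly this kind of amplification statement and is correct; your aside about what ``majority'' means when $\cA$ is a search procedure with several valid outputs is a genuine subtlety worth flagging, and your resolution via cheap verification (or, equivalently, having $\cA$ return a canonical witness) is the right one for the contexts in which the paper later invokes the lemma.

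One very small point you may wish to tighten: the claim that the $\Od(k\log n)=\Od(\log^2 n)$ cost of tallying is dominated by $\Od(T\log n)$ implicitly uses $T=\Omg(\log n)$. This holds in every application in the paper, and in any case the word-RAM conventions of \autoref{sec:prelim} make majority-of-$k$ constant-word outputs essentially free, so the conclusion is unaffected; it is just worth being explicit that the $\Od(T\log n)$ bound is what the repetition itself costs and the bookkeeping is negligible under the stated model.
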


\autoref{thm:whp} enables us to do Grover's search over the outcomes of applying $\cA$ on different inputs, because quantum computational errors accumulates linearly.\footnote{In fact, it is possible to apply Grover's search over bounded-error verifier \emph{without} the logarithmic overhead \autocite{HoyerMW03}.}

\subsection{Definitions}\label{sec:defs}

\begin{dfn}[Longest Common Substring (LCS)]
A string $\ts$ is a \emph{longest common substring} (LCS) of strings $\tA$ and $\tB$ if it is a substring of both, and $|\ts|\geq|\Tt|$ for every common substring.
\end{dfn}

\begin{dfn}[Generalized substring of an RLE String] \label{dfn:rle-substr}
For two RLE strings $s$ and $t$, we say $s$ is a \emph{generalized substring} of $t$ if $\ts$ is a substring of $\Tt$.
For example, for $t=\torle{aaabbbbccddddd}$, the RLE string $\torle{bbbbcc}$ is a substring as well as a generalized substring, but $\torle{abbbbccdd}$ is a generalized substring but not a substring.
\end{dfn}

\begin{dfn}[LCS Problem on RLE Strings]\label{dfn:LCSRLE}
 Given oracle access to two RLE strings $A$ and $B$, find the longest common generalized substring $s$ of $A$ and $B$, \ie the RLE of an LCS $\ts$ between $\tA$ and $\tB$, and locate an instance of $s$ in each input. More precisely, find a tuple $(i_A, i_B, \ell)$ such that $|s|=\ell$, with an instance of $s$ starts within the run $A[i_A]$, and another instance of $s$ starts within the run $B[i_B]$. We denote this problem as $\LCSRLE$.
\end{dfn}

\begin{dfn}[Decoded Length of LCS on RLE Strings Problem ($\DLLCSRLE$)]\label{dfn:DLLCSRLE}
Given oracle access to two RLE strings $A$ and $B$,
calculate $|\ts|$ such that $\ts$ is an LCS of $\tA$ and $\tB$.
We denote this problem as $\DLLCSRLE$.
\end{dfn}

\begin{dfn}[Encoded Length of LCS on RLE Strings Problem]\label{dfn:ELLCSRLE}
Given oracle access to two RLE strings $A$ and $B$,
calculate $|s|$ such that $\ts$ is an LCS of $\tA$ and $\tB$.
We denote this problem as $\ELLCSRLE$.
\end{dfn}

In \autoref{sec:LB} we show a near linear lower bound on query complexity for $\DLLCSRLE$,
and a near-linear, $\Omg(n/\log^2n)$, lower bound for $\ELLCSRLE$.
Both problems are bounded by reductions from the parity problem.

\begin{dfn}[Parity Problem]\label{dfn:PARITY}
	Given oracle access to a length-$n$ binary string $B\in\set{0,1}^n$, find $\bigoplus_{i=1}^n B_i$, the parity of $B$, where $\oplus$ is addition in $\mathbb{Z}_2$. We denote this problem as $\PARITY$.
\end{dfn}

With a short reduction, we show that $\ELLCSRLE$ and $\LCSRLE$ shares the same lower bound on query complexity (\autoref{thm:LB-LCSRLE}).
As a result, we loosen the requirement, and assume the oracle of prefix-sum of the inputs are also given.

\begin{dfn}[prefix-sum of the runs of an RLE string]\label{dfn:prefix-sum}
For an RLE string $s$, $P_s[i]$ is the $i$th \emph{prefix-sum} of the runs, \ie $P_s[i] := \sum_{j=1}^iR(s[j])$, with $P[0] := 0$. Intuitively, $P_s[i]$ is the index where $s[i]$, the $i$-th run in $s$, ends in the decoded string $\ts$. As a consequence, for $i\leq j$, the decoded length of $s[i:j]$ is $P_s[j] - P_s[i-1]$.
\end{dfn}

Note that the prefix-sum oracle can be constructed and stored in QRAM in linear time while doing the RLE compression, thus constructing it only adds a constant factor in preprocessing time. Our main algorithm solves the LCS problem with the prefix-sum oracle provided, formalized as below.

\begin{dfn}[LCS Problem on RLE Strings, with Prefix-sum Oracles]\label{dfn:LCSRLEP}
Given oracle access to two RLE strings $A$ and $B$ and prefix-sums of their runs, $P_A$ and $P_B$,
find an RLE string $s$, such that $\ts$ is an LCS of their decoded counterparts $\tA$ and $\tB$.
More precisely, the algorithm outputs the same triplet as the one for $\LCSRLE$.
We denote this problem as $\LCSRLEP$.
\end{dfn}

\section{LCS from two RLE strings with Prefix-sum Oracles}\label{sec:meat}

In this section, we describe how our algorithm for $\LCSRLEP$ works.
We first introduce a handy subroutine -- inverse prefix-sum -- that translates the indices for a decoded string $\ts$ into the ones for the RLE string $s$ in $\Od(\log n)$ queries to the prefix-sum oracle.
Then we move on to show the internals of the main algorithm.

\paragraph{Overview of the algorithm}
To find a longest common substring from two RLE strings $A$ and $B$, we do a binary search on the decoded length of the answer $\td\in[\tn]$. In each iteration of the binary search, we check whether a common substring $s$ such that its decode length $|\ts|=\td$ exists.
The checking algorithm consists of two sub-algorithms, one finds the short answers, and the other one finds the long ones. Both of them are executed in every iteration of the binary search. Here, we say a common substring $s$ is ``short" if its \emph{encoded length} is not greater than $n^{1/3}$, otherwise we say it is ``long". The algorithm for short answers is discussed in \autoref{sec:small-d}, and the one for long answers is discussed in \autoref{sec:large-d}.

\subsection{Inverse Prefix-Sum}\label{sec:invPS}

In \autoref{sec:defs}, we introduced the notion of prefix-sum,
which can be seen as a translation of indices $i$ of an RLE string to the corresponding indices $\ti$ in the decoded string. Here we show how to do an inverse-ish translation, that sends $\ti$ to $i$.
This translation is utilized heavily in our algorithm.

\begin{lma}[Inverse Prefix-sum, $\Piv_S$]\label{thm:inv-prefix}
Given a prefix-sum oracle of an RLE string $S$ of encoded length $\Od(n)$, one can calculate the function $\Piv_S:[\tn]\rightarrow[n]$ that maps indices of $\TS$, the decoded string, to the corresponding ones of $S$ in $\Od(\log n)$ time.
\end{lma}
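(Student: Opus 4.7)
The plan is to exploit the monotonicity of the prefix-sum: since every run has length at least $1$, the sequence $P_S[0]<P_S[1]<\cdots<P_S[n]$ is strictly increasing, and for any $\ti\in[\tn]$ there is a unique $i\in[n]$ with $P_S[i-1]<\ti\leq P_S[i]$; this $i$ is exactly the run of $S$ containing the $\ti$-th character of $\TS$, so $\Piv_S(\ti):=i$ is the desired inverse.

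Given this, I would implement $\Piv_S$ by a straightforward binary search on $[0,n]$. Maintain an interval $[l,r]$ with the invariant $P_S[l]<\ti\leq P_S[r]$, initialized to $l=0$, $r=n$. At each step, query $P_S[m]$ with $m=\floor{(l+r)/2}$ and compare to $\ti$: if $\ti\leq P_S[m]$ set $r\gets m$, otherwise set $l\gets m$. The interval halves each round, so after $\Od(\log n)$ iterations we have $r=l+1$ and return $i=r$. Each iteration uses one query to $O_P$ and a comparison of two $\Od(\log\tn)$-bit numbers, which cost $\Od(1)$ each under the word RAM assumption. Total time and query cost are therefore $\Od(\log n)$.

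There is no real obstacle here; the only thing to be careful of is the boundary convention $P_S[0]=0$ together with the strict inequality $P_S[i-1]<\ti$, which ensures the invariant is preserved and that the returned $i$ satisfies $\ti\in(P_S[i-1],P_S[i]]$. The procedure is deterministic and reversible, so it can be used as a subroutine inside Grover's search or minimum finding without any amplification overhead, as it will be throughout the rest of the algorithm.
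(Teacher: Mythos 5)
Your proof is correct and follows the same approach as the paper: binary search over the run indices, justified by the strict monotonicity of the prefix-sum. The paper's version is a one-liner; you simply spell out the loop invariant and boundary conventions more explicitly.
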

\begin{proof}
	Let $\ti\in[\tn]$ be a decoded index. To find the corresponding index $i\in[n]$, we do a binary search over $[n]$ to find the $i\in[n]$ such that $P_S[i-1]<\ti\leq P_S[i]$. The process is correct since the prefix-sum is strictly-increasing.
\end{proof}

\paragraph{A na\"ive approach}
The prefix-sum oracle, $\Piv_S$, together with the oracle to the RLE string, $S$, are sufficient to build the oracle for the corresponding decoded string, since a query to $\tS[\ti]$ can be answered with $C(S[P^{-1}_S(\ti)])$.
As a consequence, $\LCSRLEP$ can be solved using algorithms for LCS problem on decoded strings, such as \autocite{LeGallS22}, \autocite{AkmalJ22}, and \autocite{GibneyT23}, with an extra logarithmic factor of the encoded length to the time complexity.
For example, applying this approach on \autocite{LeGallS22} gives a $\tO(\tn^{5/6})$-time algorithm.
Note that this method does not utilize the fact that RLE strings could be much shorter than the decoded ones.
And that is what we take advantage of in our algorithm, whose time cost is $\tO(n^{5/6})$.

\subsection{Algorithm for short answers}\label{sec:small-d}

The algorithm for short (shorter than $n^{1/3}$) answers finds a common substring $s$ from $A$ and $B$ that has a decoded length $|\ts|$ at least $\td$ and has an encoded length $|s|$ not greater than $n^{1/3}$.
$s$ can be identified by a 3-tuple $(i_A,i_B,\ell)$ such that $s$ starts within the runs $A[i_A]$ and $B[i_B]$ in $A$ and $B$, respectively; and $\ell$ is the encoded length of $s$.
As will be shown later, with inverse prefix-sum oracles, $\ell$ can be derived from $i_A$ and $i_B$, so the algorithm only has to find $(i_A,i_B)$, which will be called \emph{$\td$-witness pair} in the following text.

\begin{dfn}[$\td$-witness pair]\label{dfn:witness-pair}
	For two RLE strings $A$ and $B$ of length $n$, and a positive integer $\td\leq\tn$, a pair $(i_A, i_B) \in [n]\times[n]$ is a \emph{$\td$-witness pair} if and only if there exists a common generalized substring $s$ between $A$ and $B$ such that $|\ts|\geq\td$, $|s|<n^{1/3}$, and it starts within runs $A[i_A]$ and $B[i_B]$ in $A$ and $B$, respectively.
\end{dfn}

To cover all possible cases, in \autoref{dfn:witness-pair}, the associated common substring can start from any character within the runs $A[i_A]$ and $B[i_B]$.
But it's possible to reduce the number of cases to check with the observation that, for a pair $(i_A,i_B)\in[n]\times[n]$ it must be the case that $R(A[i_A]) \leq R(B[i_B])$ or $R(A[i_A]) \geq R(B[i_B])$.

\begin{pro}\label{thm:shift}
	For $(i_A,i_B)\in[n]\times[n]$, $(i_A,i_B)$ is a $\td$-witness pair if and only if there exists a common generalized substring $s$ between $A$ and $B$ such that $|\ts|=\td$, $|s|<n^{1/3}$ and $s$ starts from the beginning of $A[i_A]$ or from the beginning of $B[i_B]$, depending on which one is shorter.
	\begin{proof}\hfill
  	  \begin{itemize}
  		  \item $(\Leftarrow)$:
  			  This is true by \autoref{dfn:witness-pair}.
  		  \item $(\Rightarrow)$:
  			  From \autoref{dfn:witness-pair}, there exists a common substring $t$ that starts within $A[i_A]$ and $B[i_B]$, which implies $C(A[i_A]) = C(t[1]) = C(B[i_B])$.
  			  Assuming $R(A[i_A])\geq R(B[i_B])$, then $t$ can be extended to its left, until $R(t[1]) = R(B[i_B])$, getting another RLE string $t'$.
  			  Because $|\Tt'|\geq|\Tt|\geq\td$ and $|t'|=|t|<n^{1/3}$, the prefix of $t'$ with decoded length $\td$ is the $s$.
  			  A similar argument takes care of the case that $R(A[i_A])\leq R(B[i_B])$.
  	  \end{itemize}
	\end{proof}
\end{pro}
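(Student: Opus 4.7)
The plan is to treat the two directions of the biconditional separately. The $(\Leftarrow)$ direction is immediate from \autoref{dfn:witness-pair}: any substring starting at the beginning of $A[i_A]$ or $B[i_B]$ in particular starts within that run, the hypothesis $|\ts|=\td$ gives $|\ts|\geq\td$, and the encoded length bound $|s|<n^{1/3}$ together with the character match $C(A[i_A])=C(B[i_B])=C(s[1])$ transfer directly.

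For $(\Rightarrow)$, I would assume without loss of generality that $R(A[i_A])\geq R(B[i_B])$, and aim to produce a witness substring that starts at the beginning of $B[i_B]$ (the symmetric case is identical). Let $t$ be the common generalized substring supplied by \autoref{dfn:witness-pair}, and let $\alpha,\beta$ denote the offsets of the start of $\Tt$ inside $A[i_A]$ and $B[i_B]$, respectively. Since $C(t[1])=C(A[i_A])=C(B[i_B])$, shifting the left endpoint of $t$ leftward within its first run does not change the encoded length $|t|$. The goal is to shift by exactly $\beta$ characters so that the new start lands at the beginning of $B[i_B]$ while staying inside $A[i_A]$; the only thing that needs justification is $\alpha\geq\beta$.

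To establish $\alpha\geq\beta$, I would split on whether $t$ consists of one run or several. If $t$ has at least two runs, the boundary between $t[1]$ and $t[2]$ must coincide in $\tA$ with the right end of $A[i_A]$ (otherwise the next character in $\tA$ would still equal $C(A[i_A])$ and $t[1]$ would not be maximal) and similarly in $\tB$ with the right end of $B[i_B]$. This forces $R(A[i_A])-\alpha=R(B[i_B])-\beta=R(t[1])$, and then $R(A[i_A])\geq R(B[i_B])$ yields $\alpha\geq\beta$. If instead $t$ is a single run, then its $|\Tt|\geq\td$ copies of the shared character fit inside both $A[i_A]$ and $B[i_B]$, so $R(A[i_A]),R(B[i_B])\geq\td$, and the prefix of $B[i_B]$ of decoded length $\td$ is itself a valid $s$ (encoded length $1<n^{1/3}$, starting at the beginning of $B[i_B]$ and within $A[i_A]$).

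Once the extension is performed in the multi-run case, the resulting RLE string $t'$ satisfies $|t'|=|t|<n^{1/3}$ and $|\Tt'|=|\Tt|+\beta\geq\td$, and its leftmost character now sits at the start of $B[i_B]$; its prefix of decoded length exactly $\td$ is the desired $s$. The main subtlety I expect is precisely the offset-comparison step: the hypothesis $R(A[i_A])\geq R(B[i_B])$ only converts to $\alpha\geq\beta$ through the run-boundary argument, so the single-run case must be handled separately because that boundary constraint is vacuous there.
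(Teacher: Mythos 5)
Your proof is correct and follows essentially the same strategy as the paper's: assume WLOG $R(A[i_A])\geq R(B[i_B])$, shift the left endpoint of the witness $t$ leftward within the first run to land at the start of $B[i_B]$, and then truncate to decoded length $\td$. The paper compresses this into a single sentence (``extend $t$ to its left until $R(t[1]) = R(B[i_B])$''), which quietly relies on the fact that when $t$ has at least two runs, the boundary after $t[1]$ must align with the ends of both $A[i_A]$ and $B[i_B]$, forcing $\alpha-\beta = R(A[i_A])-R(B[i_B])\geq 0$. You make that alignment argument explicit, which is the actual content of the step, and you correctly observe that it is vacuous when $t$ is a single run — a case the paper's phrasing does not literally cover (extending only leftward cannot always reach $R(t[1])=R(B[i_B])$ there) — and handle it separately by taking the length-$\td$ prefix of $B[i_B]$ directly. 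So this is the same route, but with the implicit case split and offset-comparison spelled out rather than glossed over.
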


Given a $\td$-witness pair $(i_A,i_B)$, WLOG, we can assume $R(A[i_A])\geq R(B[i_B])$.
From \autoref{thm:shift}, the associated common generalized substring $s$ starts from the beginning of the run $B[i_B]$ and ends within the run $B[j_B]$ where $j_B = \Piv_B(P_B[i_B]+\td-R(B[i_B]))$ since $|\ts| = \td$.
In other words, $s$ can be identified by the 3-tuple $(i_A, i_B, j_B-i_B+1)$.
And since one call to the inverse prefix-sum (to calculate $j_B$) costs $\tO(1)$ time (\autoref{thm:inv-prefix}), finding the 3-tuple can be reduced to finding a $\td$-witness pair $(i_A,i_B)\in[n]\times[n]$, which can be solved with an algorithm similar to the element distinctness algorithm in \autocite{LeGallS22}.

In the setting of \autocite{LeGallS22}, the element distinctness algorithm takes two input lists $X$ and $Y$ (both have size $\Od(n)$), and finds a pair $(x, y)\in X\times Y$ such that $f(x,y)=1$ for some predicate function $f$.
The overall time cost is $\tO(n^{2/3}\cdot T)$ where $T$ is the time cost to evaluate the predicate function $f$.

In our setting, $X$ and $Y$ are indices of $A$ and $B$, \ie $[n]$.
For $x\in X$ and $y\in Y$, $f(x,y) = 1$ if and only if $(x,y)$ is a $\td$-witness pair, which can be verified in time $T=\tO(\sqrt{n^{1/3}})=\tO(n^{1/6})$.
Thus, the time complexity is $\tO(n^{2/3}\cdot n^{1/6})=\tO(n^{5/6})$.
The more in-depth discussion on how to verify a $\td$-witness pair is in \autoref{sec:QW}.
The following lemma summarizes the algorithm for short answers.

\begin{thm}[Algorithm for short answers]\label{thm:small-d}
	Given oracle access to RLE strings $A$ and $B$, both have encoded length $\Od(n)$, and their prefix-sums, with a positive integer $\td \leq \tn$, there exists a quantum algorithm $\calA$, with high probability, finds a 3-tuple $(i_A,i_B,|s|)$ that identifies a common generalized substring (see \autoref{dfn:rle-substr}) $s$ between $A$ and $B$ of the following properties if it exists: $s$ starts within runs $A[i_A]$ and $B[i_B]$ in $A$ and $B$, respectively. $s$ has encoded length $|s| <n^{1/3}$ and decoded length $|\ts|\geq\td$. Otherwise $\calA$ rejects.
  	The algorithm costs $\tO(n^{5/6})$ time.
\end{thm}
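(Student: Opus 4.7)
The plan is to reduce finding the desired 3-tuple to locating a $\td$-witness pair $(i_A,i_B)\in[n]\times[n]$ and then invoke the claw-finding primitive of \autocite{LeGallS22} on that instance. Set $X=Y=[n]$ and define the predicate $f(i_A,i_B):=g(i_A,A,i_B,B)\vee g(i_B,B,i_A,A)$, where $g(i,C,j,D)$ returns $1$ iff $R(C[i])\le R(D[j])$ and there is a common generalized substring of decoded length exactly $\td$ and encoded length strictly less than $n^{1/3}$ that starts at the first character of $C[i]$ and is aligned inside $D$ at the matching offset within $D[j]$. By \autoref{thm:shift}, the pairs satisfying $f$ are exactly the $\td$-witness pairs, so once the primitive returns such a pair the required $(i_A,i_B,|s|)$ is obtained by one more call to $\Piv_B$ (or $\Piv_A$) to convert the known decoded length $\td$ into the encoded length $|s|$.

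Next I would implement $g$ in $\tilde\Od(n^{1/6})$ time. Given $(i,C,j,D)$, first use \autoref{thm:inv-prefix} on $C$ and $D$ to locate the runs $C[i']$ and $D[j']$ in which the length-$\td$ candidate ends; each call costs $\tilde\Od(1)$. Reject immediately if either $i'-i+1$ or $j'-j+1$ exceeds $n^{1/3}$, which enforces the encoded-length bound. Otherwise apply Grover's search over the at most $n^{1/3}$ interior run-indices to find a $k$ at which the $k$th run of the candidate in $C$ disagrees with the aligned run of $D$ in character or in run length; the two boundary runs $C[i],D[j]$ and $C[i'],D[j']$ require only matching characters together with a one-sided length inequality, which is an $\Od(1)$ check given prefix-sum lookups. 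Accept iff no disagreement is found. By \autoref{thm:whp}, the inner Grover call can be boosted to success probability $1-1/\poly(n)$ in the same $\tilde\Od(\sqrt{n^{1/3}})=\tilde\Od(n^{1/6})$ budget, so it can be safely used as a sub-routine inside the outer walk.

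Feeding $f$ into the claw-finding primitive then yields total running time $\tilde\Od(n^{2/3}\cdot n^{1/6})=\tilde\Od(n^{5/6})$, and a final application of \autoref{thm:whp} amplifies the outer success probability to $1-1/\poly(n)$. Correctness of the reduction holds in both directions: if a common substring $s$ with $|\ts|\ge\td$ and $|s|<n^{1/3}$ exists, then the runs containing the first character of $s$ in $A$ and $B$ form a $\td$-witness pair by \autoref{dfn:witness-pair}, which \autoref{thm:shift} lifts to a pair satisfying $f$; conversely, every pair with $f=1$ certifies, by construction, a valid $s$ of the required form.

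The main obstacle I anticipate is the careful bookkeeping inside $g$ around the \emph{partial} first and last runs of the candidate: the Grover-based mismatch test must distinguish the interior runs, where the character \emph{and} run length must agree exactly, from the two boundary runs, where only the character needs to match and only a one-sided length inequality (enough remaining length in $C[i],D[j]$ on the left and in $C[i'],D[j']$ on the right) needs to hold. A secondary subtlety is that $g$ is deliberately asymmetric in $(C,D)$ so that the anchor is the shorter run as required by \autoref{thm:shift}; this is why $f$ is taken as the disjunction $g(i_A,A,i_B,B)\vee g(i_B,B,i_A,A)$, and one must verify that this disjunctive predicate still fits the input shape of the claw-finding primitive of \autocite{LeGallS22} within the same $\tilde\Od(n^{2/3}\cdot T)$ budget.
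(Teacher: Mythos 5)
The overall reduction you set up (find a $\td$-witness pair, then convert to a 3-tuple via inverse prefix-sum) matches the paper, as does the $\tO(n^{1/6})$ Grover-based verifier for a single pair. The gap is in the step you flag but do not resolve: you cannot feed an arbitrary \emph{pairwise} predicate $f(i_A,i_B)$ into the element-distinctness/claw-finding primitive and conclude a $\tO(n^{2/3}\cdot T_f)$ bound. The primitive, as stated in the paper's preliminaries, takes a single-argument function $f:(X\cup Y)\to\mathbb{N}$ with a three-way comparison, because its efficiency hinges on keeping the currently stored subsets \emph{sorted} so that the MNRS checking cost is $c(r)=\tO(\sqrt{r}\cdot T)$. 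With a generic pairwise predicate the best you can do for the check of a vertex $(R_A,R_B)$ is Grover over $R_A\times R_B$, giving $c(r)=\tO(r\cdot T)$; plugging $r=n^{2/3}$, $T=n^{1/6}$, $\delta=\Omega(r^2/n^2)$ into the MNRS formula then yields $\tO(n^{7/6})$, not $\tO(n^{5/6})$.

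What is missing is precisely the bulk of the paper's argument: \autoref{thm:api} and its proof. The paper builds a history-independent skip list $D_A(R_A)$ whose elements $i_A$ are sorted by the triple $f(i_A)=\bigl(v(A,i_A,n^{1/3}),\,R(A[i_A]),\,i_A\bigr)$, where $v$ is the fixed-prefix view of the RLE string with the first-run length normalized to $1$; this total order can be tested in $\tO(n^{1/6})$ by a Grover mismatch search. Then, given a fixed $i_B$ with $R(B[i_B])\le R(A[i_A])$, \autoref{thm:shift-skew} pins down a \emph{unique} candidate substring $s$, and all $i_A\in R_A$ whose view $v(A,i_A,|s|)$ starts with $v(s,1,|s|)$ form a contiguous interval in the skip list, locatable by binary search in $\tO(n^{1/6})$; a range-maximum query over $R(A[\cdot])$ in that interval decides, in $\tO(1)$, whether some $i_A$ also satisfies the first-run length inequality. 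This yields a per-$i_B$ check in $\tO(T)$, so the vertex check is a single Grover over $R_B$ of cost $c(r)=\tO(\sqrt{r}\cdot T)$, and the $\tO(n^{5/6})$ bound follows. Your proposal reproduces the verifier $g$ but not this data-structure layer, which is what makes the quantum walk run in the claimed time; until that is supplied, the running-time claim does not follow.
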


\subsubsection{Quantum walk search}\label{sec:QW}

To find a $\td$-witness pair in $[n]\times[n]$, we use the quantum walk framework of \autocite{MagniezNRS11} in the formulation of \autocite{LeGallS22}, in which we walk on a direct product of two Johnson graphs.

A Johnson graph, denoted as $J(n,r)$, consists of $\binom{n}{r}$ vertices, each being an $r$-sized subset $R$ of a list $S$ of size $n$.
In $J(n,r)$, two vertices $R_1$ and $R_2$ are connected iff $|R_1\cap R_2|=r-1$.

In the direct product of two Johnson graphs $J_A(n,r)$ and $J_B(n,r)$, each vertex is a pair of subsets $(R_A,R_B)$, in which $R_A$ comes from $J_A$ and $R_B$ comes from $J_B$, and the vertex is connected to another one $(R'_A,R'_B)$ if and only if $|R_A\cap R'_A|=r-1$ and $|R_B\cap R'_B|=r-1$.
In a search problem, vertices that one wants to find are called \emph{marked} vertices.

Associated with each vertex $(R_A,R_B)$, is a data structure $D(R_A,R_B)$ that supports three operations: setup, update, and checking; whose costs are denoted by $s(r)$, $u(r)$, and $c(r)$, respectively.
The setup operation initializes the $D(R_A,R_B)$ for any vertex $(R_A,R_B)$;
the update operation transforms $D(R_A,R_B)$ into $D(R'_A,R'_B)$ which is associated with a neighbouring vertex $(R'_A,R'_B)$ of $(R_A,R_B)$ in the graph;
and the checking operation checks whether the vertex $(R_A, R_B)$ is marked.
The MNRS quantum walk search algorithm can be summarized as:

\begin{thm}[MNRS Quantum Walk search \autocite{MagniezNRS11}, in the formulation of \autocite{LeGallS22}]\label{thm:MNRS}
Assume the fraction of the marked vertices is zero or at least $\dta$.
Then there is a quantum algorithm that always rejects when no marked vertex exists; otherwise, with high probability, it finds a marked vertex $(R_A,R_B)$.
The algorithm has complexity
	\[\label{eqn:MNRS}
  	  \tO\left(s(r) + \frac{1}{\sqrt{\dta}}\left(\sqrt{r}\cdot u(r)+c(r)\right)\right)
	.\]
\end{thm}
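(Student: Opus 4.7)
The plan is to follow the Szegedy--MNRS quantization recipe and then adapt the standard single--Johnson--graph analysis to the direct product $J_A(n,r)\times J_B(n,r)$. First I would set up, on a pair of registers indexed by vertices, the uniform superposition over pairs $(R_A,R_B)$ of $r$-subsets together with the coherent data structure $\ket{D(R_A,R_B)}$; this preparation costs $s(r)$ and is what the first term in \eqref{eqn:MNRS} accounts for. I would then define the Szegedy walk operator $W$ on the bipartite double cover of the product graph: one reflection about the span of ``edge'' states emanating from the left vertex (implemented by Grover-style diffusion over the $r(n-r)$ neighbours in each coordinate, together with calls to update, costing $O(u(r))$ per application) and one analogous reflection about the right-vertex edge states. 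Throughout, the checking subroutine of cost $c(r)$ is used to define the projector $\Pi_M$ onto marked vertices.

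Next I would analyse the relevant spectral quantities. The Johnson graph $J(n,r)$ is known to have spectral gap $\Theta(1/r)$ (for $r\le n/2$); for a tensor product of two reversible Markov chains the phase gap around the stationary eigenvalue is the minimum of the two, so $1-|\lambda_2(P_A\otimes P_B)|=\Theta(1/r)$ as well. Hence after quantization the relevant eigenphase gap of $W$ around $1$ is $\Omega(1/\sqrt{r})$. Combined with the assumption that the marked fraction is either $0$ or $\ge\dta$, the MNRS hitting-time lemma applied to $W$ gives an approximate reflection $R_M$ about the marked subspace using $O\bigl(\sqrt{r/\dta}\bigr)$ invocations of $W$, each of cost $O(u(r))$, together with $O(1/\sqrt{\dta})$ invocations of the checking step of cost $c(r)$.

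Finally I would use amplitude amplification on top of $R_M$ and the initial state preparation to boost the overlap with the marked subspace to $\Omega(1)$; the rejection guarantee in the ``no marked vertex'' case comes from the one-sided error of the checking step (already boosted to high probability by \autoref{thm:whp}, so that quantum errors accumulate linearly across the $\tO(\cdot)$ calls). Summing the three contributions yields exactly the complexity
\[
\tO\!\left(s(r) + \frac{1}{\sqrt{\dta}}\bigl(\sqrt{r}\cdot u(r) + c(r)\bigr)\right).
\]

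The main obstacle, and the only part that is not essentially a black-box application of \autocite{MagniezNRS11}, is verifying that the product-Johnson-graph walk really has phase gap $\Omega(1/\sqrt{r})$ and that a single walk step on the product can be implemented with a \emph{single} update-data-structure call rather than two (otherwise a factor of $2$ — harmless here — or, in a less careful reduction, an extra $\sqrt{r}$ factor could appear). The former follows from the tensor product structure of the eigenspaces of the two commuting reflections; the latter is what the LeGall--Seddighin formulation packages for us, so I would cite their reformulation rather than re-derive the product walk from scratch.
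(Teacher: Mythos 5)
The paper gives no proof of \autoref{thm:MNRS}: it is stated as an imported black box from \autocite{MagniezNRS11} in the formulation of \autocite{LeGallS22}, and the bound \eqref{eqn:MNRS} is simply the generic MNRS cost with $\Theta(\sqrt{r})$ substituted for the inverse square root of the Johnson-graph spectral gap. Your sketch is a faithful reconstruction of the standard argument behind that cited result — Szegedy quantization of the walk, the $\Theta(1/r)$ spectral gap of $J(n,r)$, preservation of the gap under the direct product, the phase-estimation-based approximate reflection, and an outer amplitude amplification — so it is not something the paper itself reproves. One conceptual slip to correct in your write-up: in the MNRS framework the approximate reflection built from $O(\sqrt{r})$ walk steps (via phase estimation on the walk operator $W$) is a reflection about the \emph{stationary state} $\ket{\pi}$, not about the marked subspace; the reflection about the marked subspace is performed exactly by the checking subroutine at cost $c(r)$. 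You have these two roles swapped, and you also folded the outer $O(1/\sqrt{\dta})$ amplification count into the per-reflection walk count (writing $O(\sqrt{r/\dta})$ instead of $O(\sqrt{r})$ per reflection), but since you then apply amplitude amplification on top, the final operation count you arrive at matches \eqref{eqn:MNRS} regardless.
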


In our setting, the list $S$ is $[n]$, and the Johnson graphs $J_A(n,r)$ and $J_B(n,r)$ are associated with the input RLE strings $A$ and $B$, respectively.
A vertex $(R_A,R_B)$ is marked if and only if a $\td$-witness pair $(i_A,i_B)\in R_A\times R_B$ exists.
The fraction of marked vertices $\dta$ is lower-bounded by $\Omg(r^2/n^2)$ since in the worst case there is only one $\td$-witness pair exists, and thus $\binom{n-1}{r-1}^2$ out of $\binom{n}{r}^2$ pairs of subsets are marked.
The data structure $D(R_A,R_B)$ consists of two instances of another data structure $D_A$ and $D_B$, one for each subset. So $D(R_A,R_B)$ can be viewed as the pair $(D_A(R_A),D_B(R_B))$ with some extra operations on top of it.

\begin{thm}\label{thm:api}
	Associated with an RLE string $A$ of length $n$ and an $r$-sized subset $R_A$ of $[n]$, there is a history-independent data structure $D_A(R_A)$ that supports the following operations:
	\begin{enumerate}[(1)]
  	  \item \textbf{setup}: in $\tO(r\cdot T(n))$ time, setup $D_A(R_A)$ for an $r$-sized subset $R_A$ of $[n]$;
  	  \item \textbf{update}: in $\tO(T(n))$ time, update $D_A(R_A)$ to $D_A(R'_A)$ where $|R'_A\cap R_A|=r-1$;
  	  \item \textbf{checking}: given an index $i_B\in[n]$ associated with another RLE string $B$ and a positive number $\td \leq \tn$, in $\tO(T(n))$ time, with high probability, determine if an $i_A\in R_A$ exists such that $R(B[i_B])\leq R(A[i_A])$ and $(i_A,i_B)$ is a $\td$-witness pair, if so, $i_A$ is returned.
	\end{enumerate}
    Here, $T(n)=n^{1/6}$ is the time cost of comparing any $i_A\in R_A$ to any $i_B\in R_B$.
\end{thm}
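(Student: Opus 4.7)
The plan is to build $D_A(R_A)$ as a two-layered data structure: within each first-run character bucket we maintain a balanced search structure (a skip-list backbone augmented with range-minimum/priority-search-tree information) keyed lexicographically on the pair $\bigl(R(A[i_A]),\,\text{the tail }A[i_A+1:]\bigr)$. All tail comparisons are performed lazily by a quantum subroutine rather than from precomputed fingerprints or suffix arrays: to compare two RLE tails of encoded length at most $n^{1/3}$ we use Grover's search inside a standard binary search on the position of the first mismatched run, which locates the mismatch in $\tO(\sqrt{n^{1/3}}) = \tO(T(n))$ quantum time, boosted to success probability $1-1/\poly(n)$ by \autoref{thm:whp}. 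The structure therefore needs no per-string preprocessing beyond the oracles already provided, and the skip-list backbone keeps it history-independent as required by the MNRS quantum walk (\autoref{thm:MNRS}).

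Setup inserts the $r$ elements of $R_A$ one by one; each insertion performs $\Od(\polylog(r))$ lazy comparisons, so it costs $\tO(T(n))$ per element and $\tO(r\cdot T(n))$ in total. An update is one removal plus one insertion and is bounded by $\tO(T(n))$ in the same way.

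The checking operation is the heart of the lemma. Given $i_B$ we form the query $(C(B[i_B]),\,R(B[i_B]),\,B[i_B+1:])$, restrict to the character bucket matching $C(B[i_B])$, and perform a two-dimensional filtered predecessor/successor query on the bucket's structure: find the elements whose first-run length is at least $R(B[i_B])$ and whose tail lies closest to the query tail in lex order from either side. The standard predecessor/successor property pins the element with maximum tail-LCP over the filtered set to a $\polylog(r)$-sized candidate set. Each candidate tail is then compared against the query tail by the same Grover-plus-binary-search subroutine in $\tO(T(n))$ time. A final $\Od(1)$-time arithmetic step using the prefix-sum oracle turns the encoded-run LCP together with the flexible first- and last-run contributions into the decoded length and decides whether $\td$ is reached; in total the checking operation costs $\polylog(r)\cdot\tO(T(n)) = \tO(T(n))$.

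The main obstacle is the interaction between the first-run length constraint $R(A[i_A])\ge R(B[i_B])$ and the tail-lex ordering: if we indexed $R_A$ purely by tail lex, the length filter could eliminate the lex-neighbours of the query and force a potentially unbounded sweep to find the best filtered tail match. The two-dimensional structure per character bucket resolves this by exposing the filtered lex-predecessor and lex-successor in $\polylog(r)$ steps, which is what keeps the number of cross-string tail comparisons, and hence the total checking time, within the claimed $\tO(T(n))$ budget.
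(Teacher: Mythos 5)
Your proposal follows the paper's core blueprint closely — a history‑independent skip list backbone holding the $r$ indices of $R_A$, quantum (Grover/minimum‑finding) comparisons of capped‑length RLE tails at cost $T(n)=\tO(n^{1/6})$, and an auxiliary range‑aggregate on first‑run lengths to reconcile the lexicographic tail order with the constraint $R(A[i_A])\ge R(B[i_B])$ — but the way you organize the query step differs from the paper, and the description as written has a real problem.

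The paper's data structure (\autoref{thm:skip-list}, applied in \autoref{sec:ds-impl}) keys the skip list on $f(i_A)=\bigl(v(A,i_A,n^{1/3}),\ R(A[i_A]),\ i_A\bigr)$ with $v$ the capped tail whose \emph{first run length is set to $1$}, so the tail (including the first character but not the first run length) is the \emph{primary} key and $R(A[i_A])$ is a tiebreaker. The checking step then (i) synthesizes from $i_B$ and $\td$ the unique candidate $s$ of \autoref{eqn:small-s}, (ii) binary‑searches for the contiguous interval of $i_A$ whose $v(A,i_A,|s|)$ extends the decoded $v(s,1,|s|)$, and (iii) issues one range‑maximum query on $g=R(A[\cdot])$ over that interval. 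Your proposal inverts the key: "keyed lexicographically on the pair $\bigl(R(A[i_A]),\text{tail}\bigr)$" places $R$ first, and under that ordering the indices whose tails share a given prefix are \emph{not} contiguous, so neither binary search nor a range‑aggregate over a single interval locates them — this is exactly the failure mode you name as "the main obstacle." You then assert that a "two‑dimensional filtered predecessor/successor" structure repairs this in $\polylog(r)$ comparisons, but you never construct it; the structure that actually does the job is the one the paper uses (and what a McCreight‑style priority search tree amounts to): key on tail lex, augment nodes with subtree max‑$R$, and answer "max $R$ over the prefix‑match interval" in $\tO(T(n))$. If you intend that, your wording of the key is backwards; if you truly mean $R$‑primary lex, the claimed query complexity is unjustified and I believe false without additional machinery.

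Two smaller points. First, you describe the check as computing a tail‑LCP for the returned candidate and then converting to a decoded length; the paper instead fixes the unique candidate $s$ up front from $(i_B,\td)$ via \autoref{thm:shift-skew} and then checks conditions (a)–(c) directly, which makes the encoded‑length cap $|s|<n^{1/3}$ and the $\td$ threshold easy to enforce — your LCP‑then‑decode route can be made correct but needs to argue that the LCP is computed only up to $n^{1/3}$ runs \emph{and} that the derived prefix of decoded length $\td$ still has encoded length $<n^{1/3}$, neither of which you state. Second, "range‑minimum" should be range‑\emph{maximum}: you want the $i_A$ in the interval maximizing $R(A[i_A])$ so as to dominate the threshold $R(B[i_B])$. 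With the key ordering corrected to tail‑first and the augmented max‑$R$ made explicit, your approach collapses to the paper's; as written, the central checking subroutine is under‑specified.
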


Let's see how to build $D(R_A,R_B)$ with $D_A(R_A)$ and $D_B(R_B)$ first and defer the proof of \autoref{thm:api} to \autoref{sec:ds-impl}.

\paragraph{The setup and update operations}
The setup and the update operations can be passed down to $D_A(R_A)$ and $D_B(R_B)$ directly, so from \autoref{thm:api} the setup time $s(r)$ is $\tO(r\cdot T(n))$ and the update time $u(r)$ is $\tO(T(n))$.

\paragraph{The checking operation}
The checking operation of $D(R_A,R_B)$ verifies if there exists a $\td$-witness pair $(i_A,i_B)$ for $i_A\in R_A$ and $i_B\in R_B$.
To build the checking operation for $D(R_A,R_B)$, we do two Grover's searches. %
In the first Grover's search, we search over $i_B\in R_B$ and call the checking operation of $D_A(R_A)$ to see whether there exists an $i_A\in R_A$ such that $(i_A,i_B)$ is a $\td$-witness pair and $R(B[i_B])\leq R(A[i_A])$ (or it verifies the absence of such an index).
In the other Grover's search, the roles of $A$ and $B$ are swapped to cover the case of $R(B[i_B]) \geq R(A[i_A])$.
By \autoref{thm:api}, the checking time for both $D_A(R_A)$ and $D_B(R_B)$ is $\tO(T(n))$, so the checking time for $D(R_A, R_B)$ is $c(r)=\tO(\sqrt{r}\cdot T(n))$.

Plugging in $\dta=\Omg(r^2/n^2)$, $s(r)=\tO(r\cdot T(n))$, $u(r)=\tO(T(n))$, and $c(r)=\tO(\sqrt{r}\cdot T(n))$ into \autoref{eqn:MNRS}, and multiplied with the comparison cost $T(n) = \tO(n^{1/6})$, the time cost of the quantum walk search is
\[
	\tO\left(r + \frac{n}{r}\left(\sqrt{r}+\sqrt{r}\right)\right) \cdot \tO\left(n^{1/6}\right).
\]
With $r=n^{2/3}$, the time cost is minimized at $\tO(n^{5/6})$.

After getting a vertex $(R_A,R_B)$ that contains a $\td$-witness pair from the quantum walk search algorithm (\autoref{thm:MNRS}), we do one extra checking operation on $D(R_A,R_B)$ to find the $\td$-witness pair $(i_A,i_B)$, from which, with inverse prefix-sum, we can find the ending runs of the common substrings $s$, and further calculate the encoded length of $s$.
This proves \autoref{thm:small-d}.

\subsubsection{Proof of \autoref{thm:api} (Data structure)}\label{sec:ds-impl}

Skip list \autocite{Pugh90} is a history-independent probabilistic data structure.
In Ambainis' Element Distinctness algorithm \autocite{Ambainis07}, skip lists are used to maintain a sorted array that supports insertion and deletion.
In \autocite{BuhrmanLPS22}, Buhrman, Loff, Patro and Speelman augmented Ambainis' skip list with indexing operation.
Recently, Akmal and Jin extended the data structure further to support range-maximum query \autocite{AkmalJ22}.
The capabilities of a skip list are summarized as below.

\begin{lma}[skip list, a mix of \autocite{Ambainis07}, \autocite{BuhrmanLPS22} and \autocite{AkmalJ22}]\label{thm:skip-list}
	There is a history-independent data structure that maintains an array of $\Od(r)$ elements $e_1,e_2,e_3,\ldots,e_r$ that are sorted with regard to some function $f$.
	The data structure supports the following operations, and all of them succeed with high probability\footnote{Each operation is aborted after its denoted running time. This might break the data structure, but Ambainis showed that this doesn't impact the quantum walk algorithm too much. The detailed analysis can be found in \autocite[Lemma 5 and 6]{Ambainis07}.}.
	\begin{enumerate}
  	  \item \textbf{Insertion}:
  		  Given a new element, insert it into the array.
  	  \item \textbf{Deletion}:
  		  Given an element in the array, remove it from the array.
	\end{enumerate}
	The operations above both cost $\tO(T_f)$ time where $T_f$ is the time to determine whether $f(e_i)\leq f(e_j)$ for $i$, $j$ in $[r]$.
	\begin{enumerate}[resume]
  	  \item \textbf{Indexing}:
  		  Given an index $1\leq i \leq r$, return the $i$-th element in the array.
  		  This operation costs $\tO(1)$ time.
  	  \item \textbf{Range-maximum query}:
  		  Given two indices $1 \leq i \leq j \leq r$, return $k\in[i,j]$ such that $g(e_k)=\max_{i\leq l \leq j}\{g(e_l)\}$ where $g$ is some function.
  		  This operation costs $\tO(T_g)$ time where $T_g$ is the time to determine whether $g(e_a)\leq g(e_b)$ for $a$, $b$ in $[r]$.
	\end{enumerate}
\end{lma}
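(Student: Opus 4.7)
The plan is to realize the data structure as a skip list in the style of Pugh, layered on top of a base linked list sorted by $f$, with the two existing augmentations (span counters for indexing, and per-pointer maxima for range-maximum query) merged into a single structure. First I would describe the underlying skip list: each element $e_i$ receives a tower of pointers of height $h_i$, and for history-independence following Ambainis, $h_i$ is a deterministic function of the element identity so that the shape depends only on the current set and not on the update history. With high probability the maximum tower height is $\Od(\log r)$ and every search path touches $\Od(\log r)$ nodes, which is what delivers the $\tO(\PH)$ bounds.

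Insertion and deletion are the standard skip list operations: starting at the top of the leftmost tower, descend level by level, using $f$-comparisons to locate the successor, then splice the element in or out. Each level contributes $\Od(1)$ amortized comparisons, so the cost is $\Od(\log r)\cdot T_f = \tO(T_f)$. For indexing, following Buhrman, Loff, Patro and Speelman, I would attach to every forward pointer a \emph{span} equal to the number of base-level elements it bridges; navigating to the $i$-th entry is then a top-down descent that subtracts spans until the remaining offset is zero, visiting only $\Od(\log r)$ pointers and performing no $f$-comparisons, for a total of $\tO(1)$ time.

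For range-maximum query, following Akmal and Jin, I would attach to every forward pointer the maximum of $g$ over the base-level elements it covers. Any interval $[i,j]$ decomposes canonically into $\Od(\log r)$ such covered segments, which are read off during a top-down descent from the lowest common ancestor node; taking the maximum of the corresponding cached values uses $\tO(T_g)$ comparisons. Updating the span and max annotations during insertion or deletion is local, since only the pointers crossing the inserted or deleted tower change; both annotations along the $\Od(\log r)$ affected pointers can be recomputed within the same asymptotic budget, because the neighboring towers are already touched by the splice.

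The step I expect to be the main obstacle is not the algorithmics but history-independence after the augmentation. Because the annotations are deterministic functions of the current element set together with the heights (which are themselves fixed by element identity), the augmented structure inherits history-independence from the bare skip list; but one must verify that every update recomputes every annotation that the current set determines, and in particular that deletions leave no stale max values in towers whose spans have changed. The cleanest route is an induction on the sequence of operations establishing that after each update the annotations equal their canonical values on the current set, using the locality of the affected pointers together with the invariants maintained by the basic skip list, and finally invoking the aborted-operation analysis of Ambainis to absorb the low-probability event that some tower exceeds the $\Od(\log r)$ height bound.
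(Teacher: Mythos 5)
Your proposal follows the route the paper itself points to: the lemma is stated in the paper as a citation result combining \autocite{Ambainis07} (history-independent skip list with insertion and deletion), \autocite{BuhrmanLPS22} (span counters for indexing), and \autocite{AkmalJ22} (per-pointer maxima for range-maximum query), with no proof given beyond the attribution, and your reconstruction is a faithful sketch of exactly those three augmentations composed together, including the key closing check that the annotations are canonical functions of the current set and tower heights so history-independence survives the augmentation. The only imprecision is cosmetic: a skip list has no tree-style lowest common ancestor, and the $\Od(\log r)$ canonical segment cover of $[i,j]$ is instead obtained by searching for $e_i$ and then greedily advancing at the highest level whose pointer does not overshoot $j$; this does not alter the argument or the claimed bounds.
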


We use a skip list to build the data structure $D_A(R_A)$ introduced in \autoref{thm:api}.
It stores an $\Od(r)$-sized subset $R_A$ of $[n]$ associated with an RLE string $A$ of length $\Od(n)$.
In the following text, we show what functions $f$ and $g$ are used.
After that, we show how to support the operations of $D_A(R_A)$ listed in \autoref{thm:api} with the skip list.

\paragraph{The function $g$ for range-maximum query}
Here, the function $g$ maps an index $i_A$ of the associated RLE string $A$ to the length of the run $A[i_A]$, \ie $g(A, i_A) = R(A[i_A])$.
Since this uses $\Od(1)$ query to the input string $A$ and $R(A[i_A])$ is a number, the time cost $T_g$ is $\Od(1)$.

\paragraph{The function $f$ for comparing elements}
We define $f$ to be a function that maps an index $i_A\in[n]$ of the associated RLE string $A$ to a tuple:
\[\label{eqn:cmpf}
	f: i_A \mapsto \left(v(A, i_A, n^{1/3}),\ g(A, i_A),\ i_A\right)
\]
where $g$ is the function for the range-maximum query, and for some RLE string $S$, $v(S,i,\ell)$ denotes a substring starting from the run $S[i]$ of encoded length at most $\ell$ with the length of its first run set to $1$.
More precisely,
\[\label{eqn:val-prefix}
	v: (S,i, \ell) \mapsto C(S[i])^1S[i+1:i+\ell-1].
\]
For example, $v(\torle{aabbbbbbccccddeeee}, 2, 3)$ is $\torle{bccccdd}$.

\begin{figure}[H]
	\centering
  \ifx\included\undefined
\documentclass[crop]{standalone}
\usepackage{tikz}
\usetikzlibrary{
	decorations.pathmorphing,
	calc,
	matrix,
}
\begin{document}
\input{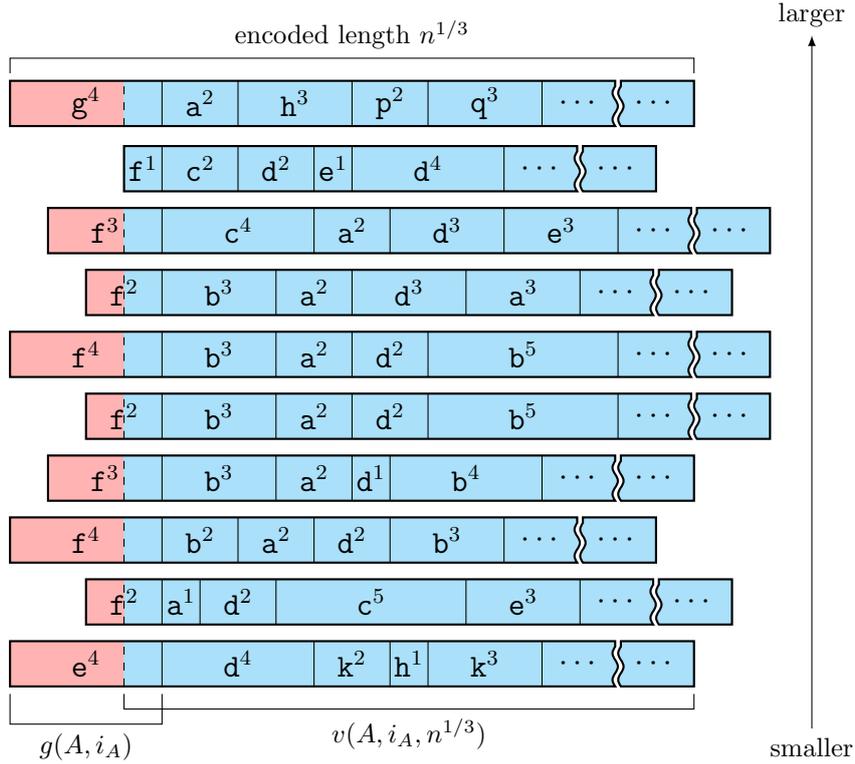}
\fi

\begin{tikzpicture}

\newcommand{\h}{0.6}
\newcommand{\xs}{0.5}

\matrix[row sep=.5em, outer sep=1em] (M) {
	\drawrle{g4a2h3p2q3}
	\coordinate(a) at ($(R)+(0,\h+0.1)$);
	\coordinate(b) at ($(B)+(0,\h+0.1)$);\\

	\drawrle{f1c2d2e1d4}\\

	\drawrle{f3c4a2d3e3}\\

	\drawrle{f2b3a2d3a3}\\

	\drawrle{f4b3a2d2b5}\\

	\drawrle{f2b3a2d2b5}\\

	\drawrle{f3b3a2d1b4}\\

	\drawrle{f4b2a2d2b3}\\

	\drawrle{f2a1d2c5e3}\\

	\drawrle{e4d4k2h1k3}
	\coordinate(c) at ($(R)-(0,0.1)$);
	\coordinate(d) at (-\xs,-0.1);
	\coordinate(e) at ($(B)-(0,0.1)$);\\
};

\draw[latex-]
	(M.north east) node[above, desc]{larger}
	--
	(M.south east) node[below, desc]{smaller};

\begin{scope}[every node/.style={brkt, desc}]
\draw (a)          --++(0,   \h/3) -| node [above] {encoded length $n^{1/3}$} (b);
\draw (d)          --++(0,  -\h/3) -| node [below] {$v(A,i_A,n^{1/3})$} (e);
\draw (d)++(\xs,0) --++(0,-2*\h/3) -| node [below] {$g(A,i_A)$} (c);
\end{scope}

\end{tikzpicture}

\ifx\included\undefined
\end{document}
\fi
	\caption{
    	An illustration of how $i_A$ in the data structure $D_A(R_A)$ are sorted.
    	The order of $i_A$ in $D_A(R_A)$ is determined by comparing their corresponding substring $A[i_A:i_A+n^{1/3}-1]$ (the rows in the figure).
    	More precisely, the order of two rows is determined by comparing first the blue parts (\autoref{eqn:val-prefix}) then the red parts (essentially, the length of the first runs).
    	After that, the indices themselves are used to break ties.
	}
	\label{fig:order}
\end{figure}

For any $i_A,i'_A\in[n]$, we compare $f(i_A)$ and $f(i'_A)$ entry-to-entry from left to right.
We say $v(A,i_A, n^{1/3}) \leq v(A,i'_A, n^{1/3})$ if the former is lexicographically smaller or equal to the latter when they are decoded.
More precisely, with the former and the latter denoted as $s_A$ and $s'_A$, respectively, we do the comparison in the following way:
\begin{enumerate}[(1)]
	\item \label{itm:cmpf-1}
  	  We use minimal finding to find the smallest $k$ such that $s_A[k]\neq s'_A[k]$.
  	  If no such $k$ exists, we determine the order according to $|s_A|$ and $|s'_A|$.
  	  Otherwise, we proceed to the next step.
	\item If $C(s_A[k]) \neq C(s'_A[k])$, we use them to determine the order.
  	  Otherwise, it must be the case that $R(s_A[k]) \neq R(s'_A[k])$, and we proceed to the next step.
	\item WLOG, let's assume $R(s_A[k]) < R(s'_A[k])$.
  	  Then if $k=|s_A|$, $i_A$ comes before $i'_A$ since it is shorter.
  	  Otherwise, we determine the order based on $C(s_A[k+1])$ and $C(s'_A[k])$.
\end{enumerate}

As a result, the time cost for checking whether $f(i_A)\leq f(i'_A)$ is dominated by the minimal finding in step \ref{itm:cmpf-1}, and thus with the guarantee that both $|v(A,i_A,n^{1/3})|$ and $|v(A,i'_A,n^{1/3})|$ are in $\Od(n^{1/3})$, the time cost for one comparison $T_f$ is $\tO(\sqrt{n^{1/3}})=\tO(n^{1/6})$.

Now, let's see how to build the three operations -- setup, update, and checking -- in \autoref{thm:api} from the ones provided by the data structure in \autoref{thm:skip-list}.

\paragraph{The setup and update operations}
To set up $D_A(R_A)$, we insert the elements of $R_A$ into an empty skip list one-by-one, which costs $\tO(rn^{1/6})$ time.
To update $D_A(R_A)$ to $D_A(R'_A)$, the only element in $R_A \setminus R'_A$ is deleted from the skip list, then the one in $R'_A\setminus R_A$ is inserted, getting time complexity $\tO(n^{1/6})$.

\paragraph{The checking operation}
The checking operation of $D_A(R_A)$ takes an index $i_B$ associated with another RLE string $B$ and a positive integer $\td\leq\tn$ as inputs.
Its output is an index $i_A\in R_A$ such that $(i_A,i_B)$ is a $\td$-witness pair and $R(B[i_B])\leq R(A[i_A])$.

\begin{col}\label{thm:shift-skew}
	For $(i_A,i_B)\in[n]\times[n]$, if $R(A[i_A])\geq R(B[i_B])$ and $(i_A,i_B)$ is a $\td$-witness pair, then $s$ is the associated common generalized substring if and only if
	\begin{enumerate}[(1)]
  	  \item $s$ is a generalized substring of $B$ that starts from the beginning of $B[i_B]$,
  	  \item $|\ts|=\td$,
  	  \item $|s|<n^{1/3}$
  	  \item $s$ is a generalized substring of $A$ that starts within the run $A[i_A]$.
	\end{enumerate}
	\begin{proof}\hfill
  	  This is the direct result of combining $R(A[i_A]) \geq R(B[i_B])$ with \autoref{thm:shift}.
	\end{proof}
\end{col}

For $i_A\in R_A$ and $i_B\in R_B$, due to (1) and (2) in \autoref{thm:shift-skew}, there is only one possible $s$ we have to check -- the one that starts from the beginning of $B[i_B]$ and has a decoded length $\td$.
The $s$ ends within the run $B[j_B]$ where $j_B = \Piv_B(\tj_B)$ with $\tj_B=P_B[i_B]+\td-R(B[i_B])$.
More precisely, the $s$ ends at the $r_B$-th character in the run where $r_B = \tj_B-P_B[j_B-1]$, and thus it is in the following form
\begin{equation}\label{eqn:small-s}
	s := B[i_B:j_B-1]C(B[j_B])^{r_B}.
\end{equation}
If $j_B-i_B+1>n^{1/3}$ (the encoded length is too long) or $P_B[i_B]+\td-B[i_B]>\tn$ (the decoded length is too short), the $s$ does not meet (1), (2), and (3) at the same time so the checking operation rejects.\footnotemark{}
Then we check if $s$ fulfills (4), which can be further split into following parts:
\begin{enumerate}[(a)]
	\item $C(A[i_A]) = C(s[1])$ and $s[2:|s|-1]=A[i_A+1:i_A+|s|-2]$.
	\item $C(A[i_A+|s|-1])=C(s[|s|])$ and $R(A[i_A+|s|-1])\geq R(s[|s|])$
	\item $R(A[i_A])\geq R(s[1])$.
\end{enumerate}

\footnotetext{Since conditions (1), (2), and (3) are independent to $A$, it's possible to check them in the checking operation of $D(R_A,R_B)$, just before calling the checking operation of $D_A(R_A)$.}

With the $v$ defined in \autoref{eqn:val-prefix}, (a) and (b) can be rewritten as the decoding of $v(s,1, |s|)$ being a prefix of the decoding of $v(A, i_A, |s|)$.
Because $|s|<n^{1/3}$ and the elements in the skip list are first sorted lexicographically \wrt $v(A, i_A, n^{1/3})$, of which $v(A, i_A, |s|)$ is a prefix, all $i_A\in R_A$ that meet (a) form a interval in the skip list, whose starting point and ending point can be located with binary search.
After that, we use range-maximum query to find a $k_A$ in the interval such that $R(A[k_A])\geq R(A[i_A])$ for all $i_A$ in the interval.%
\footnote{$R(A[i_A])$ is not necessarily sorted in the interval because it's possible that for some $i'_A,i_A\in R_A$ that $R(A[i'_A]) > R(A[i_A])$ and both fulfill (a), but $i'_A$ comes before $i_A$ in the skip list due to $v(A,i'_A,\ell')$ being lexicographically smaller than $v(A,i_A,\ell)$. See \autoref{fig:checking} for an example.}
If $R(A[k_A])\geq R(s[1])$, $s$ meets (c), and thus $(k_A,i_B)$ is a $\td$-witness pair so $(k_A,i_B)$ is returned.
Otherwise, if $R(s[1]) > R(A[k_A]) \geq R(A[i_A])$ for all $i_A\in R_A$, no $i_A\in R_A$ is able to form a $\td$-witness pair with the given $i_B$ so the checking operation rejects.

\begin{figure}[H]
	\centering
    \ifx\included\undefined
\documentclass[crop]{standalone}
\usepackage{tikz}
\usetikzlibrary{
	decorations.pathmorphing,
	calc,
	matrix,
}
\begin{document}
\input{../pic}
\fi

\begin{tikzpicture}
\newcommand{\h}{0.6}
\newcommand{\xs}{0.5}

\matrix[row sep=.5em]{
	\drawrle{z3f3b3a2d4x3}[2][][9]
	\node at (HEAD)[left, outer sep=.5em] {$B$};
	\draw (R)++(0,\h+0.1)
	      -- ++(0,\h/3)
	      -| node [brkt,above]{$s$}
	      ($(B)+(0,\h+0.1)$);\\

	\path[use as bounding box](0,0) -- (0,\h);\\

	\drawrle{g4a2h3p2q3}[1][4]\\

	\drawrle{f1c2d2e1d6}[1][4]\\

	\drawrle{f1b3a2d3a3}[1][4]
	\coordinate (a) at ($(HEAD)+(-0.4,\h/2)$);\\

	\drawrle{f4b3a2d2b5}[1][4]\\

	\drawrle{f2b3a2d2b5}[1][4]\\

	\drawrle{f3b3a2d1b4}[1][4]
	\coordinate (b) at ($(HEAD)-( 0.4,\h/2)$);\\

	\drawrle{f4b2a2d2b3}[1][4]\\

	\drawrle{f2a1d2c5e3}[1][4]\\

	\drawrle{e4d4k2h1k3}[1][4]
	\draw (-\xs, -0.1)
	      -- ++(0, -\h/3)
	      -| node [below, brkt, desc]
	              {$v(A,i_A,|s|)$}
	      ($(B)-(0,0.1)$);
	\draw (0, -0.1)
	      -- ++(0,-2*\h/3)
	      -| node [below, brkt, desc]
	              {$g(A,i_A)$}
	      ($(R)-(0,0.1)$);\\
};

\draw (a)
      -- ++(-2.2,0)
      |- node[brkt, desc, left, text width=2cm, inner sep=0pt]
             {The interval for $s$}
      (b);
\end{tikzpicture}

\ifx\included\undefined
\end{document}
\fi
	\caption{
    	An example of finding an $i_A\in R_A$ that meets requirement (4) in \autoref{thm:shift-skew} for $i_B=2$, $B=\torle{zzzfffbbbaaddddxxx}$, and $\td=9$.
    	The top-most row shows the corresponding $s=\torle{fffbbbaad}$ that we are going to check (\autoref{eqn:small-s}).
    	Other rows are some entries in $D_A(R_A)$ (see \autoref{fig:order}).
    	In each row, the blue part is $v(A,i_A,|s|)$, and the decoded length of the red one is $g(A, i_A)-1$.
    	The line on the left marks the interval, in which the blue parts all start with $v(B,i_B,|s|)$.
    	Range-maximal query then finds a $k_A$ that maximizes $g(A,k_A)$ in the interval.
    	Since  $g(A,k_A)=4\geq 3=g(B,i_B)$, there exists a $\td$-witness pair $(k_A,i_B)$.
	}
	\label{fig:checking}
\end{figure}

In terms of time cost for the checking operation, identifying the $s$ that fulfills (1), (2), and (3) costs $\tO(1)$ time; the binary searches that locate the interval compare $v(s,1,|s|)$ with $v(A,i_A,|s|)$ for at most $\Od(\log r)$ different $i_A\in R_A$, using $\Od\left(\log(n^{2/3})\right) \cdot\tO\left(\sqrt{|s|}\right)=\tO(n^{1/6})$ time; and the time cost for the range-maximal query is $\tO(1)$. Hence one checking operation costs $\tO(n^{1/6})$ time. This completes the proof of \autoref{thm:api} and wraps up the algorithm for small answers.

\subsection{Algorithm for long answers}\label{sec:large-d}

In this section, we show how to find a common generalized substring $s$ of decoded length $\td$ with encoded length at least $n^{1/3}$, which will be referred to as a ``long answer" in the following text.

For a given $\td$, we search over the all possible encoded length, $[n^{1/3}, n]$, to find a long answer. We will introduce a procedure that, given an input $\ell\in[n]$, it finds a long answer of encoded length between $\ell$ and $2\ell$ in $\tO\PP*{n/\sqrt{\ell}}$ quantum time  (\autoref{thm:sample-pair}). This procedure is called repeatedly for $\ell=n^{1/3}$, $2n^{1/3}$, $4n^{1/3}, 8n^{1/3}$, and so on, until a long answer is found or until $\ell$ reaches $2^kn^{1/3}>n$ for some integer $k$. Because the search range doubles between calls, the procedure is executed for at most $k=\ceil{\log_2(n/n^{1/3})}\in\Od(\log n)$ times. Hence the overall time cost is
\[
	\sum_{k=0}^{\Od(\log n)} \tO\PP*{\frac{n}{\sqrt{2^kn^{1/3}}}}
	< \sum_{k=0}^{\Od(\log n)} \tO\PP*{\frac{n}{\sqrt{n^{1/3}}}}
	\in \tO\PP*{n^{5/6}},
\]
which coincides with the time complexity of the algorithm for short answers (\autoref{thm:small-d}).
The following theorem summarizes the algorithm for long answers.

\begin{thm}[Algorithm for long answers]\label{thm:large-d}
	Let $A$ and $B$ be RLE strings of encoded length $n$ and decoded length $\tn$.
	Given oracle access to $A$, $B$, and their prefix-sums, for all $\td\in[\tn]$, there exists a quantum algorithm $\cA$ that finds a tuple of integers $(i_A,i_B, |s|)$ which identifies a common generalized substring (see \autoref{dfn:rle-substr}) $s$ between $A$ and $B$ of the following properties if it exists:
	\begin{itemize}
  	  \item $s$ starts within runs $A[i_A]$ and $B[i_B]$ in $A$ and $B$, respectively;
  	  \item The encoded length of $s$ is $|s|\geq n^{1/3}$; and
  	  \item The decoded length of $s$ is $|\ts|\geq \td$.
	\end{itemize}
	Otherwise, $\cA$ rejects.
	The time cost of $\cA$ is $\tO(n^{5/6})$ and it succeeds with high probability.
\end{thm}

\subsubsection{Good pairs}

\begin{dfn}[Good pair (modified from \autocite{LeGallS22})]\label{dfn:good-pair}
	Let $\td$ and $\ell$ be integers.
	For two RLE strings $A$ and $B$ of encoded length $n$, we say the pair $(P,S)$, where $P:=A[i_P:i_P+2\floor{\frac13 \ell}-1]$ and $S:=B[i_S:i_S+\ell-1]$, is a \emph{good pair} if and only if
	\begin{enumerate}[(1)]
  	  \item There exists a ``long answer" $s$ between $A$ and $B$, such that $|\ts|\geq \td$ and $|s|\geq \ell$.
  	  \item $s$ starts within runs $A[i_A]$ and $B[i_B]$ in $A$ and $B$, respectively.
        	\Ie there exists a pair $(i_A,i_B)\in[n]\times[n]$ such that $s$ is a common generalized substring between $A[i_A:i_A+|s|-1]$ and $B[i_B:i_B+|s|-1]$.
  	  \item In $A$, $P$ resides within the ``middle part" of $s$ in $A$, \ie $A[i_A+1:i_A+|s|-2]$. More precisely, $i_A+1\leq i_P$ and $i_P+2\floor{\ell/3}-1\leq i_A+|s|-2$.
  	  \item Given (3), in $B$, $S$ covers the corresponding $P$ in $B$.
  		  More precisely, $i_S\leq i_B+(i_P-i_A)$ and $i_B+(i_P-i_A)+2\floor{\ell/3}-1\leq i_S+\ell-1$.
	\end{enumerate}
\end{dfn}

By (1), if no long answer exists or one exists but $|s|<\ell$, no good pair exists.
Suppose at least one long answer $s$ exists and $|s|\geq\ell$. When a pair of substrings $(P,S)$ is sampled uniformly\footnotemark, the probability that $P$ meets (3) is $\Omg\PP*{\frac{|s|-2-|P|}{n}}=\Omg\PP*{\frac{\ell-|P|}{n}}=\Omg\PP*{\frac{\ell}{n}}$; given (3), the probability of $S$ meeting (4) is $\Omg\PP*{\frac{\ell-|P|}{n}}=\Omg\PP*{\frac{\ell}{n}}$.
As a result, with probability $\Omg\PP*{\frac{\ell^2}{n^2}}$, $(P,S)$ is a good pair.
\footnotetext{More precisely, we sample $(i_P,i_S)$ uniformly from $[n-2\floor{\ell/3}]\times[n-\ell]$.}

\begin{figure}[H]
	\centering
    \ifx\included\undefined
\documentclass[tikz,crop]{standalone}
\usetikzlibrary{
	calc,
	matrix,
}
\begin{document}
\input{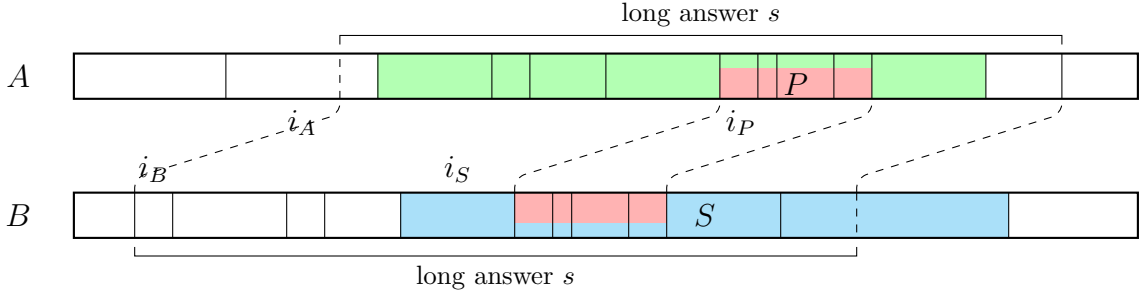}
\fi

\begin{tikzpicture}
	\newcommand{\h}{0.6}
	\newcommand{\As}{5.0}
	\newcommand{\Bs}{2.3}
	\matrix [row sep=0em, column sep=1em] {
		\node {$A$};
		&
		\begin{scope}[yscale=\h,shift={(\As,-1/2)}]
			\fill[fill=green!30] (-1,0) rectangle (7,1);
			\fill[fill=red!30]  (3.5,0) rectangle node{$P$} (5.5,2/3);
			\node at (3.5/2+4.25/2-0.1,0) [below]  {$i_P$};
			\node at (-3/2+-1/2,0)        [below]  {$i_A$};
			\draw (-1.5,1.2) --++(0,0.2) -| node[brkt, desc, above]{long answer $s$} (8,1.2);
			\foreach \bd in {-3, -1, 0.5, 1, 2, 3.5, 4, 4.25, 5, 5.5, 7, 8}
				\draw (\bd,0) -- ++(0, 1);
			\draw[thick] (-\As,0) rectangle ++(14, 1);
			\coordinate (A) at (0,0);
		\end{scope}
		\\
		\node {$B$};
		&
		\begin{scope}[yscale=\h,shift={(\Bs,-1/2)}]
			\fill[fill=cyan!30] (2,0) rectangle node{$S$} (10,1);
			\fill[fill=red!30] (3.5,1/3) rectangle (5.5,1);
			\node at (-1.5/2+-1/2,1) [above]  {$i_B$};
			\node at (2/2+3.5/2,1)   [above]  {$i_S$};
			\draw (8,-0.2) --++(0,-0.2) -| node[brkt, desc, below]{long answer $s$} (-1.5,-0.2);
			\foreach \bd in {-1.5, -1, 0.5, 1, 2, 3.5, 4, 4.25, 5, 5.5, 7, 10}
				\draw (\bd,0) -- ++(0, 1);
			\draw[thick] (-\Bs,0) rectangle ++(14, 1);
			\coordinate (B) at (0,1);
		\end{scope}
		\\
	};
	\begin{scope}[yscale=\h, every path/.style={dashed}]
		\foreach \i in {-1.5, 3.5, 5.5, 8.0}
			\draw[rounded corners]
				(B)++(\i,0) -- ++(0,1/3) -- ($(A)+(\i,-1/3)$) -- ++(0,1/3);
		\draw (A)++(-1.5,0) -- ++(0, 1)
		      (B)++(+8.0,0) -- ++(0,-1);
	\end{scope}
\end{tikzpicture}

\ifx\included\undefined
\end{document}
\fi%
	\caption{An example of a good pair $(P,S)$. The solid vertical lines represent the boundaries of the runs. The red part in $S$ marks the corresponding $P$ in $B$. The indices are for RLE strings. Note that $s$ covers only a fraction of $A[i_A]$, which is why we need $P$ resides within the ``middle part" of $A[i_A:i_A+|s|-1]$ (the green part).}
	\label{fig:enter-label}
\end{figure}

From a good pair $(P,S)$, by \autoref{thm:grow-pair}, one can find a long answer $s$ such that $\ell\leq|s|\leq2\ell$ (if it exists) in $\tO(\sqrt{\ell})$ quantum time.
Therefore, applying \autoref{thm:grow-pair} on a randomly sampled pair $(P,S)$ gives a procedure to find a long answer of encoded length between $\ell$ and $2\ell$ with one-sided error rate $1-\Omg\PP*{\frac{\ell^2}{n^2}}$. Using amplitude amplification \autocite{BrassardH97} and \autoref{thm:whp}, the success probability can be boosted to $\Omg(1-1/\poly(n))$ with the overall time cost being
\[
	\tO\PP*{\sqrt{\frac{n^2}{\ell^2}}\cdot\sqrt{\ell}}
	= \tO\PP*{\frac{n}{\sqrt{\ell}}}.
\]
The following lemma summarizes the procedure.

\begin{lma}\label{thm:sample-pair}
	For all $\ell\in[n]$, if a long answer $s$ such that $|s|\in[\ell,2\ell]$ exists, it can be found in $\tO\PP*{\frac{n}{\sqrt{\ell}}}$ time, with high probability; Otherwise, its absence can be confirmed.
\end{lma}

\subsubsection{Search for a long answer from a good pair}

In this section, we show how to find a ``long answer" from a good pair $(P,S)$. By definition, there is a long answer $s$ of decoded length $|\ts|\geq\td$ exists and $|s|\geq|S|$.

\begin{lma}\label{thm:grow-pair}
Given a good pair $(P, S)$, with high probability, a long answer of encoded length between $|S|$ and $2|S|$ can be found in $\tO(\sqrt{|S|})$ quantum time. Or its absence can be confirmed.
\end{lma}

The first and the last instance of $P$ in $S$ are located by applying the string matching algorithm from \autocite{HariharanV03} with each run -- a character-length pair -- being treated as an atomic element. Both instances can be found in $\tO(\sqrt{|S|}+\sqrt{|P|})=\tO(\sqrt{|S|})$ quantum time. Let the first and the last instance start from runs $B[l]$ and $B[r]$, respectively.

If $l=r$, there is only one instance of $P$ in $S$, and it corresponds to $P$ (more precisely, $A[i_P:i_P+|P|-1]$). We then find the longest common suffix between (the decodings of) $A[1:i_P-1]$ and $B[1:l-1]$ using \autoref{thm:RLE-lcp}, with its encoded length capped under $|S|$ in $\tO(\sqrt{|S|})$ time. Similarly, we find the longest common prefix of $A[i_P+|P|:n]$ and $B[l+|P|:n]$. Now the suffix, $P$, and the prefix together form a common generalized substring between $A$ and $B$, whose decoded length is then compared to $\td$ to determine if it is a long answer.

In the case that $l\neq r$, \ie there are multiple instances of $P$ in $S$. Since $2|P|=4\floor{|S|/3} > |S|$, the two instances of $P$ in $S$ overlap with each other, and thus $P$ is $(r-l)$-periodic (with each character-length pair being counted as an element). As a consequence, the combined substring $T:=S[l:r+|P|-1]$ is $(r-l)$-periodic as well.

We then extend $P$ from its two ends, with the $(r-l)$-periodicity maintained (with a character-length pair begin counted as an element) and its encoded length capped under $2|S|$. We also extend $T$ in the same way.
Let $A[\afa:\bta]$ and $B[\afa':\bta']$ be the extended results from $P$ and $T$, respectively. Now, $s$ and $A[\afa:\bta]$ must have at least one of the following relation:
\begin{enumerate}[(1)]
	\item $s$ starts before $A[\afa]$.
	\item $s$ ends after $A[\bta]$.
	\item $s$ starts and ends within $A[\afa:\bta]$.
\end{enumerate}
\autoref{fig:periodic-cases} depicts these three cases.

The first case implies that the first runs of the extended parts, \ie $A[\afa]$ and $B[\afa']$, are the first periodic runs in the long answer, thus they correspond to each other. In the second case, a similar argument also works, with $A[\bta]$ corresponds to $B[\bta']$. For these two cases, from the corresponding runs, we follow the same procedure as in the $l=r$ case to find a long answer in $\tO(\sqrt{|S|})$ time.

As for the last case, we want to determine which instance of $P$ in $B[\afa':\bta']$ corresponds to the $P$ we sampled from $A$, which by the definition of good pairs is a substring of a long answer.
Suppose there is an instance of $P$ that starts from the run $B[j_P]$ where $\afa'\leq j_P\leq \bta'-|P|+1$.
If $i_P-\afa\leq j_P-\afa'$, the longest matching that it can be extended into has encoded length
\[
  	  (i_P-\afa) + \min(\bta-i_P,\bta'-j_P)
  	  = \min(\bta-\bta', i_P-j_P) + (\bta'-\afa)
\] so a smaller $j_P$ gives a potentially longer result.
On the other hand, if $i_P-\afa\geq j_P-\afa'$, the longest encoded length is
\[
  	  (j_P-\afa') + \min(\bta-i_P,\bta'-j_P)
  	  = \min(j_P-i_P, \bta'-\bta) + (\bta - \afa')
\] and a larger $j_P$ is preferable.
As a result, extending from two specific instances of $P$ -- the last one that starts before $B[\afa'+(i_P-\afa)+1]$ and the first one that starts after $B[\afa'+(i_P-\afa)-1]$ -- then check if any of them is longer than $\td$ when decoded is enough to rule out other choices of $P$ in $B[\afa':\bta']$. Both instances can be located by minimum finding in $\tO(\sqrt{|S|})$ time, and extending them (in the same way as in the $r=l$ case) costs another $\tO(\sqrt{|S|})$.

To complete the proof of \autoref{thm:grow-pair}, the following lemma shows how to find the longest common prefix and the longest common suffix, which are used to extend two corresponding substrings to construct a long answer.

\begin{lma}\label{thm:RLE-lcp}
    Given two RLE strings $A$ and $B$ of encoded length $n$, the RLE of the longest common prefix of $\tA$ and $\tB$ can be found in $\tO(\sqrt{n})$ quantum time with success probability $\Omg(1-1/\poly(n))$.
	The same statement is also true for the RLE of the longest common suffix.
\begin{proof}
First use minimal finding to find the first runs that $A$ and $B$ differ, \ie find the smallest $i$ such that $A[i] \neq B[i]$. If they differ in their characters, \ie $C(A[i]) \neq C(B[i])$, the substring $A[1:i-1]$ decodes to the longest common prefix. Otherwise, if the characters are the same, then it is $A[1:i-1]C(A[i])^r$, where $r = \min(R(A[i]), R(B[i]))$, that decodes to the longest common prefix.
Finally, we can boost the success probability to $\Omg(1-1/\poly(n))$ with \autoref{thm:whp}.
By reversing the indices (mapping $1$ to $n$, $2$ to $n-1$, and so on), the procedure above finds the longest common suffix.
	\end{proof}
\end{lma}

\begin{figure}[H]
	\centering
	\input{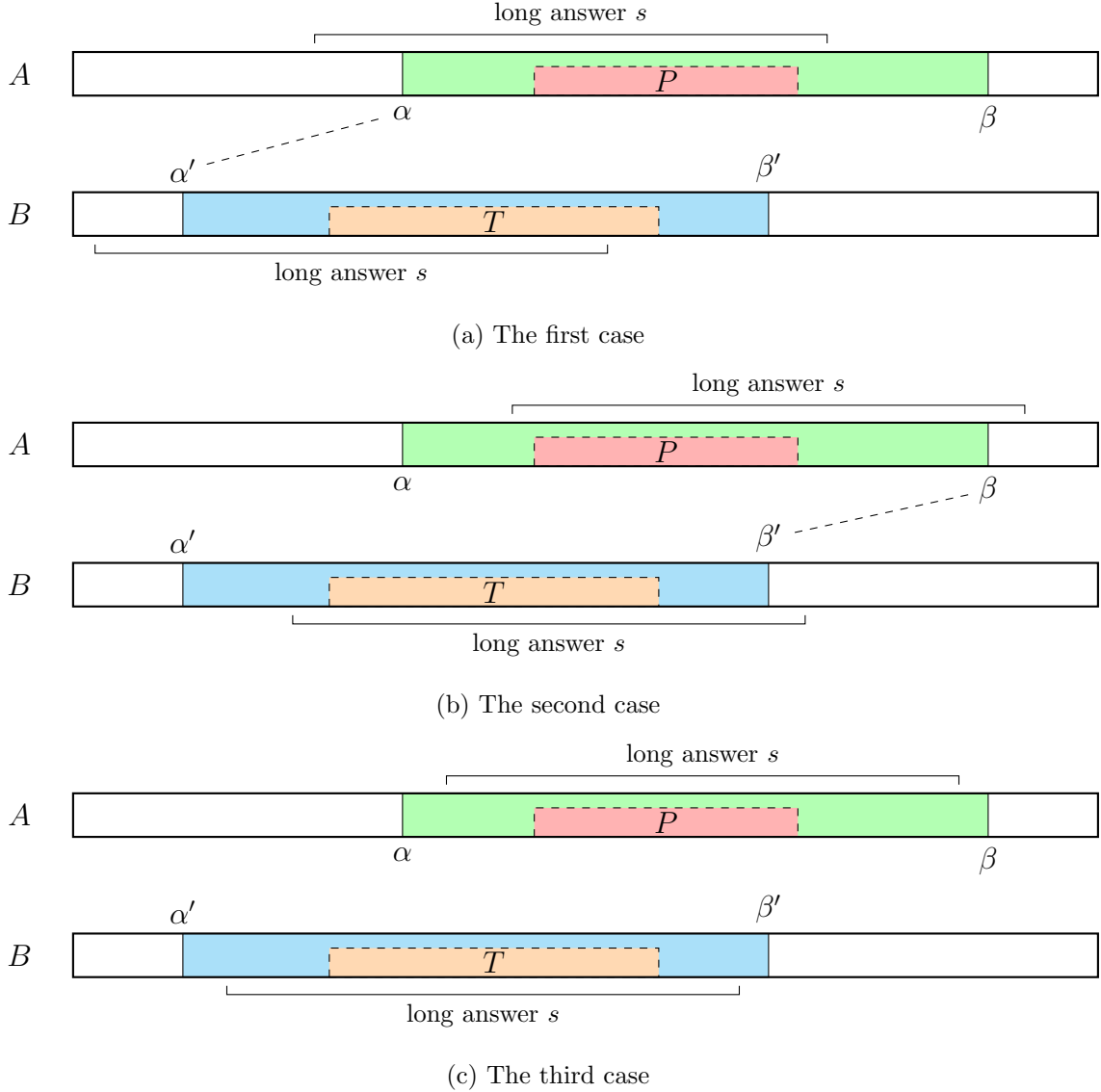}
	\caption{The three possible cases of the long answer $s$ with regard to $A[\afa:\bta]$. The green part ($A[\afa:\bta]$) and the blue part ($B[\afa':\bta']$) are extended from $P$ and $T$, respectively. The dashed lines connect the corresponding indices.}
	\label{fig:periodic-cases}
\end{figure}

\subsection{Putting things together}

To solve $\LCSRLEP$, we do a binary search over $[1,\tn]$. In every iteration, the algorithm for small answers (\autoref{thm:small-d}) and the algorithm for long answers (\autoref{thm:large-d}) are called. Since both sub-algorithm have time cost $\tO(n^{5/6})$ and the binary search ends within $O(\log \tn)$ iterations, the overall time cost is $\tO(n^{5/6})\cdot\Od(\log\tn)$.

\begin{thm}[Algorithm for $\LCSRLEP$]
	Given oracle access to RLE strings $A$ and $B$, and their prefix-sums, there exists a quantum algorithm $\cA$, with high probability, finds a 3-tuple $(i_A,i_B,|s|)$ that identifies a longest common generalized substring (see \autoref{dfn:rle-substr}) $s$ between $A$ and $B$ if it exists; otherwise $\calA$ rejects. $\cA$ has a time cost $\tO(n^{5/6})\cdot\Od(\log\tn)$, where $n$ and $\tn$ are the encoded length and decoded length of input strings, respectively.
\end{thm}

\newcommand{\ta}{\texttt{a}}
\newcommand{\tb}{\texttt{b}}
\newcommand{\tc}{\texttt{c}}
\newcommand{\tat}{\texttt{@}}
\newcommand{\tsr}{\texttt{\#}}
\section{Lower Bounds}\label{sec:LB}

In this section, we investigate the time complexity lower bound for calculating the length (encoded and decoded) of longest common substring from two RLE strings without access to prefix-sum oracles. And use the results to show the lower bound of finding an LCS from two RLE strings is $\tOmg(n)$, which is our motivation and justification to introduce the prefix-sum oracle.

\subsection{Lower Bound on \texorpdfstring{$\DLLCSRLE$}{DLLCSRLE}}\label{sec:LB-DLLCSRLE}

In this section we show how to reduce $\PARITY$ to $\DLLCSRLE$, obtaining the following result.

\begin{lma}[Lower Bound of $\DLLCSRLE$]\label{thm:LB-DLLCSRLE}
Any quantum oracle algorithm $\calA$ requires at least $\Omg(n)$ queries to solve $\DLLCSRLE$, with probability at least $2/3$.
\end{lma}

The main idea is to encode an $n$-bit binary string $B=B_1B_2\ldots B_n$ as an RLE string $S_B$,
in which $R(S_B[i]) = 2+B_i$.
Then, using $\calA$, we find the length of LCS of $S_B$ with itself.
Here, what $\calA$ outputs is basically the decode length of $S_B$, \ie $|\wtilde{S_B}|$.
From that, we can calculate the parity of $B$ easily,
and thus \autoref{thm:LB-DLLCSRLE} is proven.

\begin{proof}
Given an $n$-bit binary string $B$, we can construct an RLE string $S_B$ as:
\begin{equation}\label{eqn:SB}
    S_B
  	  = \rle{a,B_1+2,b,B_2+2,a,B_3+2,b,B_4+2}\cdots\gma^{B_n+2},
\end{equation}
where $\gma$ is $\ta$ if $n$ is odd, otherwise it is $\tb$.

Then we assume the algorithm $\calA$ exists.
With $S_B$ as the inputs, the output of $\calA$, the decoded length of LCS between $S_B$ and itself, is
\[
    \calA(S_B,S_B)
    = |\tS_B|
    = \sum_{i=1}^n (B_i+2)
    = 2n+\sum_{i=1}^n Bi,
\]
which has the same parity as $\bigoplus_{i\in[n]} B_i$, \ie the parity of $B$.
Therefore, by checking the lowest bit of $\calA(S_B,S_B)$, we can solve $\PARITY$ with no extra query.
Since solving $\PARITY$ requires $\Omg(n)$ queries, solving $\DLLCSRLE$ needs at least the same number of queries.
\end{proof}

\subsection{Lower Bounds on \texorpdfstring{$\ELLCSRLE$}{ELLCSRLE} and \texorpdfstring{$\LCSRLE$}{LCSRLE}}\label{sec:LB-LCSRLE}

In this section, we show an $\tOmg(n)$ lower bound on both $\ELLCSRLE$ (\autoref{thm:LB-ELLCSRLE}) and $\LCSRLE$ (\autoref{thm:LB-LCSRLE}).
More precisely, we reduce $\PARITY$ to $\ELLCSRLE$, which is then reduced to $\LCSRLE$.

Here is a high-level overview of the first reduction (from $\PARITY$ to $\ELLCSRLE$).
We encode an $n$-bit binary string $B=B_1B_2\cdots B_n$ into an RLE string $S_B$,
in a way similar to \autoref{eqn:SB} in the proof of \autoref{thm:LB-DLLCSRLE}.
We then assume an algorithm $\calA$ of query complexity $Q(\calA)$ for $\ELLCSRLE$ exists.
Using $\calA$, we construct an algorithm to compare the decoded length of $S_B$, \ie $|\tS_B|$, with any $k>0$.
We then use binary search on $k$ to find $|\tS_B|$, invoking $\calA$ for $\Od(\log n)$ times.
From $|\tS_B|$, we calculate the parity of $B$ without extra query.
Finally, since $\PARITY$ has query lower bound $\Omg(n)$, $Q(\calA)$ is at least $\tOmg(n)$, getting \autoref{thm:LB-ELLCSRLE} below.

\begin{lma}[Lower Bound of $\ELLCSRLE$]\label{thm:LB-ELLCSRLE}
	Any quantum oracle algorithm $\calA$ requires at least $\tOmg(n)$ queries to solve $\ELLCSRLE$, with probability at least $2/3$.
\end{lma}

\begin{proof}
Given an $n$-bit binary string $B=B_1B_2B_3\ldots B_n\in\set{0,1}^n$,
we can construct an RLE string
\[
	S_B
	= \ta^{2B_1+2}\tb^{2B_2+2}\ta^{2B_3+2}\tb^{2B_4+2}\ldots\gma^{2B_n+2}
,\]
where $\gma$ is $\ta$ if $n$ is odd, otherwise it is $\tb$.

For every positive natural number $k$, we can also construct an RLE string, simply by repeating another character:
\[
	S_k = \tc^k.
\]
We then concatenate $S_B$ and $S_k$ together with different characters in the middle, getting
\newcommand{\Sat}[1][k]{S_{B,\tat,{#1}}}
\newcommand{\Ssh}[1][k]{S_{B,\tsr,{#1}}}
\newcommand{\tSat}[1][k]{\wtilde{S}_{B,\tat,{#1}}}
\newcommand{\tSsh}[1][k]{\wtilde{S}_{B,\tsr,{#1}}}
\begin{equation}\label{eqn:StatStsr}
  	\Sat := S_B\tat^1S_k
	\quad\text{and}\quad
	\Ssh := S_B\tsr^1S_k.
\end{equation}

Let us check what we know about the LCS $\ts$ between $\tSat$ and $\tSsh$.
Firstly, $\tat$ and $\tsr$ are not in $\ts$ since none of them appears in $\tSat$ and $\tSsh$ at the same time.
Secondly, $\ts$ is a substring of $\tS_B$ or $\tS_k$, but not both,
because the character set of $\tS_B$, $\set{\ta,\tb}$, and the one of $\tS_k$, $\set{\tc}$, do not intersect.
Finally, $\ts$ is the ``longest" common substring, so it is the longest one among $\tS_B$ and $\tS_k$.

Now we assume the algorithm $\calA$ in \autoref{thm:LB-ELLCSRLE} exists,
and it has query complexity $Q(\calA)$.

Additionally, the success probability of $\calA$ can be boosted from constant to high probability with an extra logarithmic factor on its query complexity (\autoref{thm:whp}).

With $\Sat$ and $\Ssh$ as inputs, $\calA$ outputs
\begin{align}
    \calA(\Sat,\Ssh)
    &=
    \begin{cases}
   	 |S_k|,& |\tS_k| > |\tS_B| \\
   	 |S_B| \text{ or } |S_k|,& |\tS_k| = |\tS_B|\\
   	 |S_B|,& |\tS_k| < |\tS_B|
    \end{cases}\\
    &=
    \begin{cases}\label{eqn:calA_B}
   	 1,& k > |\tS_B|\\
   	 n \text{ or } 1,& k = |\tS_B|\\
   	 n,& k < |\tS_B|
    \end{cases}
.\end{align}
 
For a given $B$, we use $\calA_B({}\cdot{})$ as a shorthand for $\calA(\Sat[\,\cdot\ ],\Ssh[\,\cdot\ ])$ in the following text.
Note that when $k = |\tS_B|$, two answers ($n$ and $1$) are possible,
and we only assume $\calA$ outputs one of them.
Thus, $\calA_B(k)$ is non-deterministic when $k = |\tS_B|$.
We will resolve this issue with a property of binary search later.

To find $|\tS_B|$, we do a binary search on $k$ to find a $k'\in[2n, 4n]$, such that $\calA_B(k'-1)=n$ and $\calA_B(k')=1$.%
\footnote{The search range $[2n, 4n]$ comes from $2n \leq |\tS_B|=\sum_i (2B_i+2) \leq 4n$.}
In the binary search, $\calA_B$ will not be called with the same $k$ twice so it does not matter whether $\calA_B$ and the underlying $\calA$ are deterministic or not.
So from now on, we treat $\calA_B$ as if it were deterministic.

Since there are two possible outputs for $\calA_B(k)$ when $k=|\tS_B|$
(the middle case in \autoref{eqn:calA_B}),
each corresponds to a different result $k'$ for the binary search.
If $\calA_B(|\tS_B|)$ outputs $1$, we will get $k' = |\tS_B|$, the desired result.
But if $\calA_B(|\tS_B|)$ outputs $n$, we will get $k' = |\tS_B|+1$ instead.
We can detect if the latter one is the case from the parity of $k'$ because $ |\tS_B| = 2\sum_{i=1}^n (B_i+1) $ is always even, and thus we can correct the result accordingly.

With $|\tS_B|$ in hand, we then check if
\[
	\sum_{i=1}^n B_i
	=
	\frac12\sum_{i=1}^n \left(2B_i + 2\right) - n
	=
	\frac12|\tS_B| - n
\]
is odd or even to determine the parity of $B$.

Alternatively, we can XOR the lowest bit of $n$ with the second-lowest bit of $k'$, which is the same as the one of $|\tS_B|$, directly.
Then the result is the parity of $B$.
This allows us to avoid correcting $k'$ explicitly.

In total, we use $Q(\calA)\log^2n$ queries to solve $\PARITY$.
The logarithmic factors come from boosting $\calA$ to high probability and the binary search.
Finally, solving $\PARITY$ requires $\Omega(n)$ queries so we have
\[
    Q(\calA)\log^2n \in \Omg(n)
    \implies
    Q(\calA) \in \Omg(n/\log^2n) \in \tOmg(n),
\]
and \autoref{thm:LB-ELLCSRLE} follows.
\end{proof}

Furthermore, an algorithm solving $\LCSRLE$ outputs a triplet $(i_A,i_B,\ell)$, where $\ell$ is the encoded length of LCS between the inputs, which is also the answer to $\ELLCSRLE$.
\Ie $\ELLCSRLE$ can be reduced to $\LCSRLE$ with no extra query to the input strings.
As a result, $\LCSRLE$ shares the same query lower bound with $\ELLCSRLE$.
This gives the corollary below.

\begin{col}[Lower Bound of $\LCSRLE$]\label{thm:LB-LCSRLE}
Any quantum oracle algorithm $\calA$ requires at least $\tOmg(n)$ queries to solve $\LCSRLE$, with probability at least $2/3$.
\end{col}

\autoref{thm:LB-LCSRLE} is our motivation and justification to introduce the prefix-sum oracles.

\section*{Acknowledgement}
 This work is supported by NSTC QC project under Grant no.  111-2119-M-001-004- and   110-2222-E-007-002-MY3.

\printbibliography
\end{document}